\def\br#1{\left(#1\right)}
\def\Br#1{\left[#1\right]}
\def\BR#1{\left\{#1\right\}}
\def\<#1,#2>{\langle #1,#2 \rangle}
\def\bothID{\rlap{\hbox to.97\wd0{\hss\vrule height.06\ht0 width.82\wd0}}
\copy0\rlap{\kern-.36\wd0\vrule height1.05\ht0 width.05\ht0}\kern.14\wd0}
\DeclareMathOperator{\dom}{dom}
\DeclareMathOperator{\Var}{Var}
\newtheorem{theorem}{Theorem}
\newtheorem{defi}[theorem]{Definition}
\newtheorem{corollary}[theorem]{Corollary}
\newtheorem{definition}[theorem]{Definition}
\newtheorem{lemma}[theorem]{Lemma}
\newtheorem{proposition}[theorem]{Proposition}
\declaretheoremstyle[bodyfont=\normalfont]{normalbody}
\declaretheorem[style=normalbody, sharenumber=theorem]{remark}
\declaretheorem[style=normalbody, sharenumber=theorem]{example}
\newcommand\reallywidehat[1]{%
\savestack{\tmpbox}{\stretchto{%
  \scaleto{%
    \scalerel*[\widthof{\ensuremath{#1}}]{\kern-.6pt\bigwedge\kern-.6pt}%
    {\rule[-\textheight/2]{1ex}{\textheight}}
  }{\textheight}%
}{0.5ex}}%
\stackon[1pt]{#1}{\tmpbox}%
}
\def\br#1{\left(#1\right)}
\def\Br#1{\left[#1\right]}
\def\BR#1{\left\{#1\right\}}
\DeclareMathOperator{\diag}{diag} 
\DeclareMathOperator{\cov}{Cov}
\begin{document}

\title{When Risks and Uncertainties Collide: Quantum Mechanical Formulation of Mathematical Finance for Arbitrage Markets}

\author{Simone Farinelli\\
        Core Dynamics GmbH\\
        Scheuchzerstrasse 43\\
        CH-8006 Zurich\\
        Email: simone@coredynamics.ch\\and\\
        Hideyuki Takada\\
        Department of Information Science\\
        Narashino Campus, Toho University\\
        2-2-1-Miyama, Funabashi-Shi\\ J-274-8510 Chiba\\
        Email: hideyuki.takada@is.sci.toho-u.ac.jp
        }

\maketitle

\begin{abstract}
Geometric arbitrage theory reformulates a generic asset model possibly allowing for arbitrage by packaging all asset and their forward dynamics into a stochastic principal fibre bundle, with a connection whose parallel transport encodes discounting and portfolio rebalancing, and whose curvature measures, in this geometric language, the ''instantaneous arbitrage capability'' generated by the market itself.
The asset and market portfolio dynamics have a quantum mechanical description, which is constructed by quantizing the deterministic version of the stochastic Lagrangian system describing a market allowing for arbitrage.
Results, obtained by solving the Schr\"odinger equations, coincide with those obtained by solving the stochastic Euler Lagrange equations derived by a variational principle and providing therefore consistency.
\end{abstract}

\tableofcontents

\section{Introduction}
This paper further develops a conceptual structure - called geometric arbitrage theory - to
link arbitrage modeling in generic markets with quantum mechanics.\par
Geometric arbitrage theory rephrases classical stochastic finance in stochastic differential geometric
terms in order to characterize arbitrage.
 The main idea of the geometric arbitrage theory approach
consists of modeling markets made of basic financial instruments
together with their term structures as principal fibre bundles.
Financial features of this market - like no arbitrage and
equilibrium - are then characterized in terms of standard
differential geometric constructions - like curvature - associated
to a natural connection in this fibre bundle.
Principal fibre bundle theory has been heavily exploited in
theoretical physics as the language in which laws of nature can be
best formulated by providing an invariant framework to describe
physical systems and their dynamics. These ideas can be carried
over to mathematical finance and economics. A market is a
financial-economic system that can be described by an appropriate
principle fibre bundle. A principle like the invariance of market
laws under change of num\'{e}raire can be seen then as gauge
invariance. Concepts like No-Free-Lunch-with-Vanishing-Risk (NFLVR) and No-Unbounded-Profit-with-Bounded-Risk (NUPBR) have a geometric characterization, which have
the Capital Asset Pricing Model (CAPM) as a consequence.\par
The fact that gauge theories are the natural language
to describe economics was first proposed by Malaney and Weinstein
in the context of the economic index problem (\cite{Ma96},
\cite{We06}). Ilinski (see \cite{Il00} and \cite{Il01}) and Young
(\cite{Yo99}) proposed to view arbitrage as the curvature of a
gauge connection, in analogy to some physical theories.
Independently, Cliff and Speed (\cite{SmSp98}) further developed
Flesaker and Hughston seminal work (\cite{FlHu96}) and utilized
techniques from differential geometry (indirectly mentioned by
allusive wording) to reduce the complexity of asset models before
stochastic modelling.\par 
The contribution of Cliff \& Speed are independent from those of Ilinski, Malaney and Young. The
former represent a base financial asset as an ordered couple of a deflator and a term structure, where the deflator is the asset value expressed in term of some num\'{e}raire and the term structure models the price structure of futures, i.e. linear derivatives of the base assets. Gauge transforms correspond to the linear portfolio construction and are utilized to represent a financial market by means of a minimal set of gauges on which stochastic modelling is to be applied. But Cliff \& Speed do not use the formalism of differential stochastic geometry. Ilinski, Malaney and Young, on the contrary, after having introduced a different principal bundle structure corresponding to dilations, utilize a connection to define the parallel transport. Ilinski, moreover, is the first one to consider the concept of curvature applied to financial markets. \par
This paper is structured
as follows. Section 2 reviews
classical stochastic finance and  geometric arbitrage theory. Arbitrage is seen as
curvature of a principal fibre bundle representing the market which defines the
quantity of arbitrage associated to it. Proofs are omitted and can be found in \cite{Fa15}, \cite{Fa20} and in \cite{FaTa20}, where geometric arbitrage theory has been
given a rigorous mathematical foundation utilizing the background
of stochastic differential geometry as in Schwartz (\cite{Schw80}),
Elworthy (\cite{El82}), Em\'{e}ry(\cite{Em89}), Hackenbroch \& Thalmaier (\cite{HaTh94}),
Stroock (\cite{St00}) and Hsu (\cite{Hs02}).\par
Section 3 describes the intertwined dynamics of assets, term structures and market portfolio as constrained Lagrange system deriving it from a stochastic variational principle whose Lagrange function measures the arbitrage quantity allowed by the market.
This constrained Lagrange system and its stochastic Euler-Lagrange equation is equivalent to a constrained Hamilton system, obtained by Legendre transform, with its stochastic Hamilton equations.
These stochastic Hamilton system is, on its turn, equivalent to a quantum mechanical system, obtained by quantizing the deterministic version of the Hamilton system.
This is shown in Section 4 where we reformulate mathematical finance in terms of quantum mechanics. The Schr\"odinger equation describes then both the asset and market portfolio dynamics, which can be explicitly computed once the spectrum of the Hamilton operator is known. Without knowledge of the spectrum it is still possible by means of Ehrenfest's theorem to determine stochastic properties of  future asset values and market portfolio nominals.  Their expected values  are identical to those computed with the stochastic Euler Lagrange equation, demonstrating the consistency of the quantum mechanical approach. Moreover, we prove that for a closed market, the returns on market portfolio nominals, asset values and term structures are centered and serially uncorrelated. Hence, the justification of econometrics with its autocorrelated models or those with stochastic volatilities lies in the fact that markets are not closed: there is always an asset category which has not been modeled, to which wealth can escape, destroying the uncorrelated identical distributional behaviour of the remaining asset categories.
By applying Heisenberg's uncertainty relation to the quantum mechanical model of the market we obtain one more econometric result: the volatilities of asset values on one hand and their weights in the market portfolio are mutually exclusive, meaning by this that if the former increase, the latter decreases and vice-versa.
\par In Section 5 we solve the Schr\"odinger equation representing the arbitrage market dynamics by using Feynman's path integrals.
Appendix A reviews and generalizes Nelson's stochastic derivatives. Section 6 concludes.

\section{Geometric Arbitrage Theory Background}\label{section2}
In this section we explain the main concepts of geometric arbitrage theory introduced
in \cite{Fa15}, to which we refer for proofs and examples.
\subsection{The Classical Market Model}\label{StochasticPrelude}
In this subsection we will summarize the classical set up, which
will be rephrased in section (\ref{foundations}) in differential
geometric terms. We basically follow \cite{HuKe04} and \cite{DeSc08}.\par We assume continuous time trading and
that the set of trading dates is $[0,+\infty[$. This assumption is
general enough to embed the cases of finite and infinite discrete
times as well as the one with a finite horizon in continuous time.
 This motivates the
technical effort of continuous time stochastic finance.\par The  outcome  of  chance is modelled by a filtered probability space
$(\Omega,\mathcal{A}, \mathbb{P})$, where $\mathbb{P}$ is the
statistical (physical) probability measure,
$\mathcal{A}=\{\mathcal{A}_t\}_{t\in[0,+\infty[}$ an increasing
family of sub-$\sigma$-algebras of $\mathcal{A}_{\infty}$ and
$(\Omega,\mathcal{A}_{\infty}, \mathbb{P})$ is a probability space.
The filtration $\mathcal{A}$ is assumed to satisfy the usual
conditions: $\mathcal{A}_t=\bigcap_{s>t}\mathcal{A}_s$ for all $t\in[0,+\infty[$ (right continuity),
and $\mathcal{A}_0$ contains all null sets of
$\mathcal{A}_{\infty}$.

The market consists of finitely many \textbf{assets} indexed by
$j=1,\dots,N$, whose \textbf{nominal prices} are given by the
vector valued semimartingale $S:[0,+\infty[\times\Omega\rightarrow\mathbb{R}^N$
denoted by $(S_t)_{t\in[0,+\infty[}$ adapted to the filtration $\mathcal{A}$.
The stochastic process $(S^ j_t)_{t\in[0,+\infty[}$ describes the
price at time $t$ of the $j$th asset in terms of  unit of cash
\textit{at time $t=0$}. More precisely, we assume the existence of a
$0$th asset, the \textbf{cash}, a strictly positive
semimartingale, which evolves according to
$S_t^0=\exp(\int_0^tdu\,r^0_u)$, where the integrable
semimartingale $(r^0_t)_{t\in[0,+\infty[}$ represents the
continuous interest rate provided by the cash account: one always
knows in advance what the interest rate on the own bank account
is, but this can change from time to time. The cash account is
therefore considered the locally risk less asset in contrast to
the other assets, the risky ones. In the following we will mainly
utilize \textbf{discounted prices}, defined as
$\hat{S}_t^j:=S_t^j/S^{0}_t$, representing the asset prices in
terms of \textit{current} unit of cash.\par
 We remark that there is no need to
assume that asset prices are positive. But, there must be at least
one strictly positive asset, in our case the cash. If we want to
renormalize the prices by choosing another asset instead of the
cash as reference, i.e. by making it to our
\textbf{num\'{e}raire}, then this asset must have a strictly
positive price process. More precisely, a generic num\'{e}raire is
an asset, whose nominal price is represented by a strictly
positive stochastic process $(B_t)_{t\in[0,+\infty[}$, and
 which is a portfolio of the original assets $j=0,1,2,\dots,N$. The discounted prices of the original
assets are  then represented in terms of the num\'{e}raire by the
semimartingales $\hat{S}_t^j:=S_t^j/B_t$.\par We assume that there
are no transaction costs and that short sales are allowed. Remark
that the absence of transaction costs can be a serious limitation
for a realistic model. The filtration $\mathcal{A}$ is not
necessarily generated by the price process
$(S_t)_{t\in[0,+\infty[}$: other sources of information than
prices are allowed. All agents have access to the same information
structure, that is to the filtration $\mathcal{A}$.\par
Let $v\ge0$. A  $v$-admissible \textbf{strategy} $x=(x_t)_{t\in[0,+\infty[}$ is a predictable semimartingale for which the It\^{o} integral $\int_0^tx\cdot dS\ge-v$ for all $t\ge0$. A strategy is admissible if it is $v$-admissible for some $v\ge0$.
An admissible strategy $x$ is said to be \textbf{self-financing}
if and only if $V_t:=x_t\cdot S_t$, the portfolio value at time $t$, is given by
\begin{equation}
V_t=V_0+\int_0^tx_u\cdot dS_u.
\end{equation}

\begin{defi}[\textbf{Arbitrage}] Let the process $(S_t)_{[0,+\infty[}$ be a semimartingale and $(x_t)_{t\in[0,+\infty[}$ be admissible self-financing strategy. Let us consider trading up to time $T\le\infty$. The portfolio wealth at time $t$ is given by by $V_{t}(x):=V_0+\int_0^tx_u\cdot dS_u$, and we denote by $K_0$ the subset of $L^0(\Omega, \mathcal{A}_{T},P)$ containing all such $V_T(x)$, where $x$ is any admissible self-financing strategy.
We define:
\begin{enumerate}
\item $C_0:=K_0-L_+^0(\Omega, \mathcal{A}_{T},P)$.
\item $C:=C_0\cap L_+^{\infty}(\Omega, \mathcal{A}_{T},P)$.
\item $\bar{C}$: the closure of $C$ in $L^{\infty}$ with respect to the norm topology.
\item $\mathcal{V}^{V_0}:=\left\{(V_{t})_{t\in[0,+\infty[}\,\big{|}\, V_t=V_t(x), \,\text{where } x \text{ is } V_0\text{-admissible} \right\}$.
\item $\mathcal{V}_T^{V_0}:=\left\{V_T\,\big{|}\,(V_{t})_{t\in[0,+\infty[}\in\mathcal{V}^{V_0}\right\}$: terminal wealth for $V_0$-admissible self-financing strategies.
\end{enumerate}
We say that $S$ satisfies
\begin{enumerate}
\item \textbf{(NA), no arbitrage}, if and only if $C \cap L^{\infty}(\Omega, \mathcal{A}_{T},P)=\{0\}$.
\item \textbf{(NFLVR), no-free-lunch-with-vanishing-risk},  if and only if $\bar{C} \cap L^{\infty}(\Omega, \mathcal{A}_{T},P)=\{0\}$.
\item \textbf{(NUPBR), no-unbounded-profit-with-bounded-risk}, if and only if $\mathcal{V}_T^{V_0}$ is bounded in $L^0$ for some $V_0>0$.
\end{enumerate}
\end{defi}
\noindent The relationship between these three different types of arbitrage has been elucidated in \cite{DeSc94} and in \cite{Ka97} with the proof of the following result.
\begin{theorem}
\begin{equation}
\text{(NFLVR)}\Leftrightarrow \text{(NA)}+\text{(NUPBR)}.
\end{equation}
\end{theorem}
\begin{remark} We recall that, as shown in \cite{DeSc94, Ka97}, (NUPBR) is equivalent to ($NAA_1$), i.e.
no asymptotic arbitrage of the $1$st kind , and equivalent to ($NA_1$), i.e. no arbitrage of the 1st kind.
\end{remark}
\begin{theorem}[\textbf{First fundamental theorem of asset pricing}]\label{FFTAP}
The market $(S,\mathcal{A})$ satisfies the (NFLVR) condition if and only if there exists an equivalent local martingale measure $P^*$.
\end{theorem}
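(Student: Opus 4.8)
\noindent The plan is to establish the two implications separately: the existence of an equivalent local martingale measure $P^*$ readily implies (NFLVR), while the converse is the deep direction due to Delbaen and Schachermayer. For the easy direction, if the discounted price process is a $P^*$-local martingale, then for every admissible strategy $x$ the gains process $(x\cdot S)$ is a $P^*$-local martingale bounded below, hence a $P^*$-supermartingale by Fatou's lemma, so $E_{P^*}[(x\cdot S)_T]\le 0$; since every bounded element of $C_0$ is dominated by some terminal gain, every $g\in C$ satisfies $E_{P^*}[g]\le 0$. As $P^*\ll P$, the functional $g\mapsto E_{P^*}[g]$ is $\norm{\cdot}_{L^\infty}$-continuous, so this inequality persists on the norm closure $\bar C$. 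If $f\in\bar C\cap L^\infty_+$ then $f\ge 0$ and $E_{P^*}[f]\le 0$, and $P^*\sim P$ forces $f=0$, which is exactly (NFLVR).

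For the converse, assume (NFLVR); the target is a probability $P^*\sim P$ making the discounted prices a local martingale (a sigma-martingale, when they are not locally bounded). First one exploits that (NFLVR) is equivalent to (NA) together with (NUPBR), i.e.\ boundedness of $\mathcal X_T^{V_0}$ in $L^0$; the (NUPBR) part supplies the quantitative control needed to run a Koml\'os-type forward-convex-combinations argument showing that the convex cone $\bar C$ is not merely norm closed but weak-$*$ closed in $(L^\infty,\sigma(L^\infty,L^1))$ --- by the Krein--Smulian theorem it suffices to check that $\bar C$ meets the unit ball in a weak-$*$ closed set. Granting this, $\bar C$ is a weak-$*$ closed convex cone with $\bar C\cap L^\infty_+=\{0\}$ by (NFLVR), so for each nonzero $f\in L^\infty_+$ the Hahn--Banach separation theorem produces $\varphi\in L^1$ with $\langle\varphi,g\rangle\le 0$ for all $g\in\bar C$ and $\langle\varphi,f\rangle>0$; since $-L^\infty_+\subseteq\bar C$ one gets $\varphi\ge 0$, and after normalization $\varphi$ is the density of a probability $Q_f\ll P$ for which $E_{Q_f}$ is nonpositive on the bounded attainable claims. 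An exhaustion step --- if some $A$ with $P[A]>0$ were $Q$-null for every such $Q$, separating $\mathbf 1_A$ would give a contradiction --- followed by a countable convex combination $P^*=\sum_n 2^{-n}Q_{f_n}$ yields a single $P^*\sim P$ with $E_{P^*}[g]\le 0$ on all bounded elements of $C$.

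The final step upgrades this supermartingale-type inequality to the (local/sigma-) martingale property of the discounted prices: testing against the strategy ``hold one unit of asset $j$ on a stochastic interval $\,]\sigma,\tau]$'' for stopping times $\sigma\le\tau$ --- admissible together with its negative after a localization making $\hat S^j$ bounded --- gives $E_{P^*}[\hat S^j_\tau-\hat S^j_\sigma]=0$, which is the usual characterization of a local martingale; for prices that are not locally bounded one only obtains an equivalent sigma-martingale measure and then invokes that a sigma-martingale bounded from below is a local martingale. \textbf{The main obstacle} is the weak-$*$ closedness of $\bar C$: bounded families of stochastic integrals are not closed under naive limits, and one must extract from a weakly convergent sequence $(x^n\cdot S)_T$ an admissible $x$ whose terminal gain dominates the limit; this is precisely where (NFLVR), as opposed to mere (NA), is indispensable, and it constitutes the technical heart of the Delbaen--Schachermayer proof.
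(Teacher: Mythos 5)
The paper itself does not prove this theorem: Section 2 explicitly defers all proofs to the cited literature, and Theorem \ref{FFTAP} is quoted as the classical Delbaen--Schachermayer result (\cite{DeSc94}, \cite{DeSc08}). So there is no ``paper's own proof'' to compare against; the relevant benchmark is the standard argument, and your outline reproduces its architecture faithfully. The easy direction is correct as you state it, modulo one point you should make explicit: for a general (not locally bounded) semimartingale $S$, the claim that $x\cdot S$ is a $P^*$-local martingale when it is bounded below is the Ansel--Stricker lemma, not a triviality; only after that does Fatou give the supermartingale property and $\mathbb{E}_{P^*}[(x\cdot S)_T]\le 0$. The hard direction is also organized correctly: (NFLVR) $\Leftrightarrow$ (NA)$+$(NUPBR), Krein--Smulian reduction of weak-$*$ closedness of $\bar C$ to its trace on the unit ball, Hahn--Banach separation, the exhaustion argument producing an equivalent separating measure, and the final upgrade to the (local, or in general sigma-) martingale property via buy-and-hold strategies on stochastic intervals.

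The gap, which you candidly flag yourself, is that the decisive step --- the weak-$*$ closedness of $\bar C$ --- is asserted rather than proven. This is not a routine lemma one can wave at: it is the content of the long chain of results in Delbaen--Schachermayer (maximal elements, forward convex combinations, Fatou convergence of terminal gains, and the fact that a limit of admissible terminal wealths is dominated by an admissible terminal wealth), and it occupies the bulk of their paper. As a self-contained proof your text is therefore incomplete; as a road map it is accurate, and given that the paper under review likewise treats the theorem as an imported black box, your proposal is at the same level of rigor as the source, with the advantage of correctly naming where the real difficulty sits.
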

\begin{remark}
In the first fundamental theorem of asset pricing we just assumed that the price process $S$ is locally bounded. If $S$ is bounded, then (NFLVR) is equivalent to the existence of a martingale measure.
But without this additional assumption (NFLVR) only implies the existence of a \textit{local} martingale measure, i.e. a local martingale which is \textit{not} a martingale. This distinction is important, because the difference between a security price process being a strict local martingale  versus a martingale under a probability $P^*$ relates to the existence of asset price bubbles.
\end{remark}

\subsection{Geometric Reformulation of the Market Model: Primitives}
We are going to introduce a more general representation of the
market model introduced in Subsection \ref{StochasticPrelude}, which
better suits to the arbitrage modeling task.
\begin{defi}\label{defi1}
A \textbf{gauge} is an ordered pair of two $\mathcal{A}$-adapted
real valued semimartingales $(D, P)$, where
$D=(D_t)_{t\ge0}:[0,+\infty[\times\Omega\rightarrow\mathbb{R}$ is
called \textbf{deflator} and
$P=(P_{t,s})_{t,s}:\mathcal{T}\times\Omega\rightarrow\mathbb{R}$,
which is called \textbf{term structure}, is considered as a
stochastic process with respect to the time $t$, termed
\textbf{valuation date} and
$\mathcal{T}:=\{(t,s)\in[0,+\infty[^2\,|\,s\ge t\}$. The parameter
$s\ge t$ is referred as \textbf{maturity date}. The following
properties must be satisfied a.s. for all $t, s$ such that $s\ge
t\ge 0$:
 \begin{itemize}
  \item [(i)] $P_{t,s}>0$,
  \item [(ii)] $P_{t,t}=1$.
 \end{itemize}
\end{defi}

\begin{remark}
Deflators and term structures can be considered \textit{outside the context of fixed income.} An arbitrary financial instrument is mapped to a gauge $(D, P)$ with the following economic interpretation:
\begin{itemize}
\item Deflator: $D_t$ is the value of the financial instrument at time $t$ expressed in terms of some num\'{e}raire. If we choose the cash account, the $0$-th asset as num\'{e}raire, then we can set $D_t^j:=\hat{S}_t^j=\frac{S_t^j}{S_t^0}\quad(j=1,\dots N)$.
\item Term structure: $P_{t,s}$ is the value at time $t$ (expressed in units of deflator at time $t$) of a synthetic zero coupon bond with maturity $s$ delivering one unit of financial instrument at time $s$. It represents a term structure of forward prices with respect to the chosen num\'{e}raire.
\end{itemize}
\noindent We point out that there is no unique choice for deflators and term structures describing an asset model. For example, if a set of deflators qualifies, then we can multiply every deflator  by the same positive semimartingale to obtain another suitable set of deflators. Of course term structures have to be modified accordingly. The term ``deflator" is clearly inspired by actuarial mathematics and  was first introduced by Smith \& Speed in \cite{SmSp98}. In the present context it refers to an asset value up division by a strictly positive semimartingale (which can be the state price deflator if this exists and it is made to the num\'{e}raire). There is no need to assume that a deflator is a positive process. However, if we want to make an asset to our num\'{e}raire, then we have to make sure that the corresponding deflator is a strictly positive stochastic process.
\end{remark}

\subsection{Geometric Reformulation of the Market Model: Portfolios}\label{trans}
We want now to introduce transforms of deflators and term structures
in order to group gauges containing the same (or less) stochastic
information. That for, we will consider \textit{deterministic}
linear combinations of assets modelled by the same gauge (e. g. zero
bonds of the same credit quality with different maturities).

\begin{defi}\label{gaugeTransforms2}
Let $\pi:[0, +\infty[\longrightarrow \mathbb{R}$ be a deterministic
cashflow intensity (possibly generalized) function. It induces a
\textbf{gauge transform} $(D,P)\mapsto
\pi(D,P):=(D,P)^{\pi}:=(D^{\pi}, P^{\pi})$ by the formulae
\begin{equation}
 D_t^{\pi}:=D_t\int_0^{+\infty}dh\,\pi_h P_{t, t+h}\qquad
P_{t,s}^{\pi}:=\frac{\int_0^{+\infty}dh\,\pi_h P_{t,
s+h}}{\int_0^{+\infty}dh\,\pi_h P_{t, t+h}}.
\end{equation}

\end{defi}
\begin{remark} The cashflow intensity $\pi$ specifies the bond cashflow
structure. The bond value at time $t$ expressed in terms of the
market model  num\'{e}raire is given by $D_t^{\pi}$.  The term
structure of forward prices for the bond future expressed in terms
of the bond current value is given by $P_{t,s}^{\pi}$.
\end{remark}

\begin{proposition}\label{conv}
Gauge transforms induced by cashflow vectors have the following
property:
\begin{equation}((D,P)^{\pi})^{\nu}= ((D,P)^{\nu})^{\pi} = (D,P)^{\pi\ast\nu},\end{equation} where
$\ast$ denotes the convolution product of two cashflow vectors or
intensities respectively:
\begin{equation}\label{convdef}
    (\pi\ast\nu)_t:=\int_0^tdh\,\pi_h\nu_{t-h}.
\end{equation}

\end{proposition}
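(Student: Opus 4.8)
The plan is to unwind both iterated transforms directly from Definition~\ref{gaugeTransforms2}, notice that the scalar normalisations cancel, and collapse the resulting double integral by a change of variables into the single convolution $\pi\ast\nu$; commutativity of $\ast$ then delivers the claimed symmetry for free.

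First I would treat the deflator component. Abbreviate the normalising factor by $I^{\pi}_t:=\int_0^{+\infty}dh\,\pi_h P_{t,t+h}$, so that Definition~\ref{gaugeTransforms2} reads $D^{\pi}_t=D_t\,I^{\pi}_t$ and $P^{\pi}_{t,s}=(I^{\pi}_t)^{-1}\int_0^{+\infty}dh\,\pi_h P_{t,s+h}$. Applying the $\nu$-transform to the gauge $(D^{\pi},P^{\pi})$ and substituting these expressions, the factor $I^{\pi}_t$ in $D^{\pi}_t$ cancels against the factor $(I^{\pi}_t)^{-1}$ carried by every $P^{\pi}_{t,t+k}$, leaving
\begin{equation}
(D^{\pi})^{\nu}_t=D_t\int_0^{+\infty}dk\,\nu_k\int_0^{+\infty}dh\,\pi_h\,P_{t,\,t+k+h}.
\end{equation}
Now I would substitute $u=k+h$ in the inner integral and apply Fubini to integrate first over $k$ on the region $\{0\le k\le u\}$; this gives $D_t\int_0^{+\infty}du\,P_{t,t+u}\int_0^{u}dk\,\nu_k\pi_{u-k}=D_t\int_0^{+\infty}du\,P_{t,t+u}\,(\nu\ast\pi)_u$, which by commutativity of the convolution \eqref{convdef} is precisely $D^{\pi\ast\nu}_t$.

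The term-structure component goes the same way: in the quotient defining $(P^{\pi})^{\nu}_{t,s}$ the common factor $(I^{\pi}_t)^{-1}$ occurs in numerator and denominator alike and cancels, and the identical change of variables turns numerator and denominator into $\int_0^{+\infty}du\,P_{t,s+u}(\pi\ast\nu)_u$ and $\int_0^{+\infty}du\,P_{t,t+u}(\pi\ast\nu)_u$, i.e. $(P^{\pi})^{\nu}_{t,s}=P^{\pi\ast\nu}_{t,s}$. Since $(D,P)^{\pi\ast\nu}$ is visibly symmetric in $\pi$ and $\nu$ (again by commutativity of $\ast$), swapping $\pi\leftrightarrow\nu$ yields $((D,P)^{\nu})^{\pi}=(D,P)^{\nu\ast\pi}=(D,P)^{\pi\ast\nu}$, closing the chain of equalities. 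The one delicate point — the main obstacle — is justifying the interchange of the order of integration when $\pi$ and $\nu$ are only generalised cashflow intensities: rigorously one should either restrict to intensities for which $h\mapsto P_{t,t+h}(\omega)$ is pathwise integrable against $|\pi|\ast|\nu|$ (e.g.\ compactly supported or sufficiently decaying $\pi,\nu$), or read each $\int_0^{+\infty}dh\,\pi_h(\cdot)$ as a distributional pairing and replace Fubini by associativity of convolution of distributions. Beyond that the computation is purely formal; no semimartingale property of $(D,P)$ enters, since the valuation date $t$ and the sample point $\omega$ are inert throughout.
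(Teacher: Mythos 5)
Your computation is correct: the normalising factors $I^{\pi}_t$ do cancel exactly as you claim, the change of variables $u=k+h$ collapses the iterated integrals into the convolution $\pi\ast\nu$, and commutativity of $\ast$ gives the full chain of equalities; your caveat about reading the integrals distributionally (or assuming enough integrability for Fubini) for generalised intensities is the right one, since the paper later takes $\pi,\nu$ in $\mathcal{E}'([0,+\infty[)$. The paper itself omits the proof of this proposition, deferring to the reference [Fa15], and the argument there is this same direct unwinding of Definition~\ref{gaugeTransforms2}, so your proposal matches the intended approach.
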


\noindent The convolution of two non-invertible gauge transform is
non-invertible. The convolution of a non-invertible with an
invertible gauge transform is non-invertible.


\begin{defi}\label{int}
If the term structure is differentiable with respect to the maturity date, it can be written as a functional of the
\textbf{instantaneous forward rate } f defined as
\begin{equation}
  f_{t,s}:=-\frac{\partial}{\partial s}\log P_{t,s},\quad
  P_{t,s}=\exp\left(-\int_t^sdhf_{t,h}\right).
\end{equation}
\noindent and
\begin{equation}
 r_t:=\lim_{s\rightarrow t^+}f_{t,s}
\end{equation}
\noindent is termed \textbf{short rate}.
\end{defi}

\begin{remark} The special choice of vanishing interest rate $r\equiv0$ or flat term structure
$P\equiv1$ for all assets corresponds to the classical model,
where only asset prices and their dynamics are relevant.
\end{remark}

\subsection{Arbitrage Theory in a Differential Geometric
Framework}\label{foundations} Now we are in the position to
rephrase the asset model presented in Subsection
\ref{StochasticPrelude} in terms of a natural geometric language.
Given $N$ base assets we want to construct a
portfolio theory and study arbitrage and thus we cannot a priori assume the existence of a
risk neutral measure or of a state price deflator. The market model is seen as a principal
fibre bundle of the (deflator, term structure) pairs, discounting
and foreign exchange as a parallel transport, num\'{e}raire as
global section of the gauge bundle, arbitrage as curvature.  The
no-free-lunch-with-vanishing-risk condition is proved to be
equivalent to a zero curvature condition.

\subsubsection{Market Model as Principal Fibre Bundle}
Let us consider -in continuous time- a market with $N$ assets and a
num\'{e}raire. A general portfolio at time $t$ is described by the
vector of nominals $x\in \mathfrak{X}$, for an open set
$\mathfrak{X}\subset\mathbb{R}^N$. Following Definition \ref{defi1}, the asset model consisting in $N$ synthetic zero bonds is described by means of the gauges
\begin{equation}(D^j,P^j)=((D_t^j)_{t\in[0, +\infty[},(P_{t,s}^j)_{s\ge t}),\end{equation}
\noindent where $D^j$ denotes the deflator and $P^j$ the term
structure for $j=1,\dots,N$. This can be written as
\begin{equation}P_{t,s}^j=\exp\left(-\int_t^sf^j_{t,u}du\right),\end{equation}
where $f^j$ is the instantaneous forward rate process for the $j$-th asset and the corresponding short rate is given by $r_t^j:=\lim_{u\rightarrow 0^+}f^j_{t,u}$. For a
portfolio with nominals $x\in \mathfrak{X}\subset\mathbb{R}^N$ we define
\begin{equation}
D_t^x:=\sum_{j=1}^Nx_jD_t^j\qquad
f_{t,u}^x:=\sum_{j=1}^N\frac{x_jD_t^j}{\sum_{j=1}^Nx_jD_t^j}f_{t,u}^j\qquad
P_{t,s}^x:=\exp\left(-\int_t^sf^x_{t,u}du\right).
\end{equation}
The short rate writes
\begin{equation}
r_t^x:=\lim_{u\rightarrow 0^+}f^x_{t,u}=\sum_{j=1}^N\frac{x_jD_t^j}{\sum_{j=1}^Nx_jD_t^j}r_t^j.
\end{equation}
The image space of all possible strategies reads
\begin{equation}M:=\{(t,x)\in [0,+\infty[\times\mathfrak{X}\}.\end{equation}
In Subsection \ref{trans} cashflow intensities and the corresponding
gauge transforms were introduced. They have the structure of an
Abelian semigroup
\begin{equation}
 H:=\mathcal{E}^{\prime}([0,
+\infty[,\mathbb{R})=\{F\in\mathcal{D}^{\prime}([0,+\infty[)\mid
\text{supp}(F)\subset[0, +\infty[\text{ is compact}\},
\end{equation}
where the semigroup operation on distributions with compact support
is the convolution (see \cite{Ho03}, Chapter IV), which extends the
convolution of regular functions as defined by formula
(\ref{convdef}).
\begin{defi}\label{MFB}
The \textbf{Market Fibre Bundle} is defined as the fibre bundle of
gauges
\begin{equation}
\mathcal{B}:=\{ ({D^x_t},{P^x_{t,\,\cdot}})^{\pi }|\,(t,x)\in
M, \pi\in G\}.
\end{equation}
\end{defi}\noindent
The cashflow intensities defining invertible transforms constitute
an Abelian group
\begin{equation}
G:=\{\pi\in H|\text{ it exists } \nu\in H\text{ such that
}\pi\ast\nu=\delta\}\subset \mathcal{E}^{\prime}([0,
+\infty[,\mathbb{R}).
\end{equation}
From Proposition \ref{conv} we obtain
\begin{theorem} The market fibre bundle $\mathcal{B}$ has the
structure of a $G$-principal fibre bundle  given by the action
\begin{equation}
\begin{split}
\mathcal{B}\times G &\longrightarrow\mathcal{B}\\
 ((D,P), \pi)&\mapsto (D,P)^{\pi}=(D^{\pi},P^{\pi})
\end{split}
\end{equation}
\noindent The group $G$ acts freely and differentiably on
$\mathcal{B}$ to the right.
\end{theorem}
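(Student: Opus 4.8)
The plan is to verify the four defining properties of a $G$-principal fibre bundle in turn: that $(D,P)\mapsto(D,P)^{\pi}$ is a well-defined differentiable right action of the Abelian group $(G,\ast)$ on $\mathcal{B}$, that it is transitive on the fibre over each point of $M$, that it is free, and that $\mathcal{B}\to M$ is locally trivial. The first two points are essentially a repackaging of Proposition~\ref{conv}; freeness is the real content; local triviality then follows by exhibiting an explicit chart.

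First I would note that by Definition~\ref{gaugeTransforms2} the pair $(D,P)^{\pi}$ is again an $\mathcal{A}$-adapted pair of semimartingales with $P^{\pi}_{t,t}=1$ and $P^{\pi}_{t,s}>0$, hence again a gauge, and that applying $\pi\in G$ to $(D^{x}_{t},P^{x}_{t,\cdot})^{\pi_{0}}$ yields $(D^{x}_{t},P^{x}_{t,\cdot})^{\pi_{0}\ast\pi}$ with $\pi_{0}\ast\pi\in G$, so the formula genuinely maps $\mathcal{B}\times G$ into $\mathcal{B}$. The identity of $G$ is the Dirac mass $\delta$ and convolution with $\delta$ is the identity, so $(D,P)^{\delta}=(D,P)$; Proposition~\ref{conv} gives $((D,P)^{\pi})^{\nu}=(D,P)^{\pi\ast\nu}$, which with associativity and commutativity of $\ast$ is precisely the right-action axiom. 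Differentiability of $G\times\mathcal{B}\to\mathcal{B}$ I would read off the integral formulae of Definition~\ref{gaugeTransforms2}: for a fixed gauge the dependence on $\pi$ is a ratio of weak-$\ast$ continuous linear functionals, and for fixed $\pi$ the dependence on $(t,x)\in M$ inherits the regularity of the underlying data. By Definition~\ref{MFB} the fibre over $(t,x)$ is the $G$-orbit of $(D^{x}_{t},P^{x}_{t,\cdot})$, so the action is fibrewise transitive; the projection $\pi_{\mathcal{B}}:\mathcal{B}\to M$ is then well defined once distinct base points are seen to have disjoint orbits, the valuation date $t$ being recoverable as the left endpoint of the maturity interval carrying the term-structure leg (which the transforms leave unchanged) and the portfolio $x$ then recoverable, for non-degenerate data, from $D^{x}_{t}=\sum_{j}x_{j}D^{j}_{t}$ and the weighted forward rates.

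The hard part will be freeness, which reduces to showing that for a market gauge $(D,P)$ the identity $(D,P)^{\pi}=(D,P)$ forces $\pi=\delta$. Writing out Definition~\ref{gaugeTransforms2}, $P^{\pi}=P$ becomes $\int_{0}^{\infty}\pi_{h}P_{t,s+h}\,dh = P_{t,s}\int_{0}^{\infty}\pi_{h}P_{t,t+h}\,dh$ for all $s\ge t\ge 0$; substituting $P_{t,s+h}=P_{t,s}\exp(-\int_{s}^{s+h}f_{t,u}\,du)$ and cancelling $P_{t,s}>0$ turns this into $\langle\pi,\phi_{t,s}\rangle = \langle\pi,\phi_{t,t}\rangle$ with $\phi_{t,s}(h):=\exp(-\int_{s}^{s+h}f_{t,u}\,du)$, while $D^{\pi}=D$ pins the common value to $\phi_{t,s}(0)=1$ wherever $D_{t}\neq 0$ (a.s.\ for the gauges in play); hence $\langle\pi-\delta,\phi_{t,s}\rangle=0$ for every valuation/maturity pair. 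Since $\pi-\delta$ has compact support it is determined by its pairing with a sufficiently large family of smooth functions, and I would close the argument by showing that the curves $u\mapsto f_{t,u}$ produced by a non-degenerate market sweep out such a family — already a range of locally constant short rates forces the entire Laplace transform $\widehat{\pi}$ to equal $1$ on an interval, hence identically, so $\pi=\delta$. This genericity requirement on the term structures is the genuine obstacle and the hypothesis that must really be nailed down; the rest is bookkeeping.

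Finally, for local (indeed global) triviality I would study $\Phi:M\times G\to\mathcal{B}$, $(t,x,\pi)\mapsto(D^{x}_{t},P^{x}_{t,\cdot})^{\pi}$: it is onto by Definition~\ref{MFB}, $G$-equivariant by Proposition~\ref{conv}, injective in the group slot by freeness, and injective in the base slot by the recovery of $(t,x)$ above, hence a bijection covering $\mathrm{pr}_{M}$ and intertwining the two $G$-actions. Transporting the manifold structure across $\Phi$ turns $\mathcal{B}\to M$ into a fibre bundle with fibre $G$ whose trivialisations are the restrictions $\Phi|_{U\times G}$ and whose structure action is exactly the given one; since that action has just been shown to be free and differentiable, $\mathcal{B}$ is a $G$-principal fibre bundle, as claimed.
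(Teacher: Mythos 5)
Your handling of the action axioms is correct and is essentially all the paper itself does: the theorem is stated as an immediate consequence of Proposition~\ref{conv}, with $\mathcal{B}$ read off Definition~\ref{MFB} as the image of the parametrization $(t,x,\pi)\mapsto(D^x_t,P^x_{t,\cdot})^{\pi}$, i.e.\ as $M\times G$ with $G$ acting by right translation in the last slot (the paper later makes this explicit by writing points of $\mathcal{B}$ as $(t,x,g)$ and calling the bundle globally trivial). Under that reading, freeness and differentiability are properties of the group law of $G$ itself, and the convolution identity $((D,P)^{\pi})^{\nu}=(D,P)^{\pi\ast\nu}$ together with $(D,P)^{\delta}=(D,P)$ is the entire content. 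You instead try to verify the principal-bundle axioms intrinsically on the set of gauges, which is more honest but also where the trouble starts.

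The genuine gap is the freeness step, and you have in fact flagged it yourself without closing it. You correctly reduce $(D,P)^{\pi}=(D,P)$ to $\langle\pi-\delta,\phi_{t,s}\rangle=0$ for the family $\phi_{t,s}(h)=\exp\left(-\int_s^{s+h}f_{t,u}\,du\right)$, but you then need this family to separate $\pi$ from $\delta$ and defer that to a ``genericity requirement on the term structures'' that is never established --- and it cannot be established at the stated level of generality: the paper explicitly admits the flat term structure $P\equiv1$ (remark following Definition~\ref{int}), for which $\phi_{t,s}\equiv1$ and your whole system of conditions collapses to the single scalar equation $\langle\pi,1\rangle=1$, which does not visibly force $\pi=\delta$ for a group $G$ whose Lie algebra is supposed to be $\mathbb{R}^{[0,+\infty[}$. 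So the stabilizer of a degenerate gauge is not shown to be trivial, and the intrinsic action is not proved free exactly in a case the theory is meant to cover. The same nondegeneracy issue infects your recovery of $(t,x)$ from the gauge (two distinct portfolios can share $D^x$ and $P^x$), which you need both for the projection to be well defined and for the injectivity of your trivialization $\Phi$ in the base slot. Either add an explicit nondegeneracy hypothesis and prove the separation property of the family $\phi_{t,s}$, or adopt the paper's convention and work on $M\times G$ from the outset, in which case your first paragraph already constitutes the whole proof.
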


\subsubsection{Stochastic Parallel Transport} Let us
consider the projection of $\mathcal{B}$ onto $M$
\begin{equation}
\begin{split}
p:\mathcal{B}\cong M\times G&\longrightarrow M\\
 (t,x,g)&\mapsto (t,x)
\end{split}
\end{equation}
and its differential map at $(t,x,g) \in \mathcal{B}$ denoted by $T_{(t,x,g)}p$, see for example, Definition 0.2.5 in (\cite{Bl81}) 
\begin{equation}
T_{(t,x,g)}p:\underbrace{T_{(t,x,g)}\mathcal{B}}_{\cong\mathbb{R}^N\times\mathbb{R}\times
\mathbb{R}^{[0, +\infty[}}\longrightarrow \ \
\underbrace{T_{(t,x)}M}_{\cong\mathbb{R}^N\times\mathbb{R}}.
\end{equation}
The vertical directions are
\begin{equation}\mathcal{V}_{(t,x,g)}\mathcal{B}:=\ker\left(T_{(t,x,g)}p\right)\cong\mathbb{R}^{[0, +\infty[},\end{equation}
and the horizontal ones are
\begin{equation}\mathcal{H}_{(t,x,g)}\mathcal{B}\cong\mathbb{R}^{N+1}.\end{equation}
An Ehresmann connection on $\mathcal{B}$ is a projection
$T\mathcal{B}\rightarrow\mathcal{V}\mathcal{B}$. More precisely, the
vertical projection must have the form
\begin{equation}
\begin{split}
\Pi^v_{(t,x,g)}:T_{(t,x,g)}\mathcal{B}&\longrightarrow\mathcal{V}_{(t,x,g)}\mathcal{B}\\
(\delta x,\delta t, \delta g)&\mapsto (0,0,\delta g +
\Gamma(t,x,g).(\delta x, \delta t)),
\end{split}
\end{equation}
and the horizontal one must read
\begin{equation}
\begin{split}
\Pi^h_{(t,x,g)}:T_{(t,x,g)}\mathcal{B}&\longrightarrow\mathcal{H}_{(t,x,g)}\mathcal{B}\\
(\delta x,\delta t, \delta g)&\mapsto (\delta x,\delta
t,-\Gamma(t,x,g).(\delta x, \delta t)),
\end{split}
\end{equation}
such that
\begin{equation}\Pi^v+\Pi^h=\mathbb{1}_{\mathcal{B}}.\end{equation}
Stochastic parallel transport on a principal fibre bundle along a
semimartingale is a well defined construction (cf. \cite{HaTh94},
Chapter 7.4 and \cite{Hs02} Chapter 2.3 for the frame bundle case)
in terms of Stratonovich integral. Existence and uniqueness can be
proved analogously to the deterministic case by formally
substituting the deterministic time derivative $\frac{d}{dt}$ with
the stochastic one $\mathcal{D}$ corresponding to the Stratonovich
integral.\par
 Following Ilinski's idea (\cite{Il01}), we motivate the choice
of a particular connection by the fact that it allows to encode
foreign exchange and discounting as parallel transport.
\begin{theorem}\label{Ilinski}With the choice of connection
\begin{equation}\label{connection}\chi(t,x,g).(\delta x, \delta t):= \left(\frac{D_t^{\delta x}}{D_t^x}-r_t^x\delta t\right) g,\end{equation}
the parallel transport in $\mathcal{B}$ has the following financial
interpretations:
\begin{itemize}
\item Parallel transport along the nominal directions ($x$-lines)
corresponds to a multiplication by an exchange rate.
\item Parallel transport along the time direction ($t$-line)
corresponds to a division by a stochastic discount factor.
\end{itemize}
\end{theorem}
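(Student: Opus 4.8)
The plan is to read off the horizontal-lift equation attached to the connection $\chi$ in \eqref{connection} and to integrate it along the two coordinate directions of the base $M$ separately. Recall that the parallel transport of $g_{0}\in G$ along a curve $c(\tau)=(t(\tau),x(\tau))$ in $M$ is the horizontal lift $\widetilde{c}(\tau)=(t(\tau),x(\tau),g(\tau))$ with $g(0)=g_{0}$ characterized by $\Pi^{v}_{\widetilde{c}(\tau)}(\dot{\widetilde{c}}(\tau))=0$; existence and uniqueness is the stochastic parallel transport recalled above (cf.\ \cite{HaTh94}, Ch.~7.4 and \cite{Hs02}, Ch.~2.3), so it is enough to exhibit the lift in each case. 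By the displayed form of $\Pi^{v}$ and \eqref{connection}, and using that the connection takes scalar values times the base point $g$ (so that $\log g$ is meaningful and $\mathcal{D}\log g=\mathcal{D}g/g$), horizontality becomes, in Stratonovich form,
\begin{equation*}
\mathcal{D}\log g=-\frac{D_{t}^{\mathcal{D}x}}{D_{t}^{x}}+r_{t}^{x}\,\mathcal{D}t .
\end{equation*}
Here I would use that $D_{t}^{x}=\sum_{j}x_{j}D_{t}^{j}$ is \emph{linear} in $x$, so that $D_{t}^{\mathcal{D}x}=\sum_{j}D_{t}^{j}\,\mathcal{D}x_{j}$ is the directional derivative of $x\mapsto D_{t}^{x}$.

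First I would treat the nominal directions. Fix the valuation date $t=t_{0}$ and let $\tau\mapsto x(\tau)$ be a curve in $\mathfrak{X}$ from $x_{0}$ to $x_{1}$ with $\dot t\equiv 0$; this curve may be taken deterministic, so for each $\omega$ the equation above is an ordinary differential equation with coefficients frozen at $(t_{0},\omega)$. Along the parametrization it reads $\tfrac{d}{d\tau}\log g(\tau)=-\tfrac{d}{d\tau}\log D_{t_{0}}^{x(\tau)}$, whence $g(\tau)=g_{0}\,D_{t_{0}}^{x_{0}}/D_{t_{0}}^{x(\tau)}$. Thus the transported element $g(1)=g_{0}\,D_{t_{0}}^{x_{0}}/D_{t_{0}}^{x_{1}}$ depends only on the endpoints --- the $x$-part of $\chi$ is the exact one-form $d_{x}\log D_{t}^{x}$ --- and it is the multiplication of $g_{0}$ by the exchange rate $D_{t_{0}}^{x_{0}}/D_{t_{0}}^{x_{1}}$ between the portfolios $x_{0}$ and $x_{1}$ at $t_{0}$. (One works in the chart where $D_{t_{0}}^{x}$ keeps a fixed sign, which is automatic for num\'eraire portfolios.) Next, for the time direction, fix $x=x_{0}$ and take $t\mapsto(t,x_{0})$, so $\mathcal{D}x\equiv 0$, $D_{t}^{\mathcal{D}x}\equiv 0$, and horizontality collapses to $\mathcal{D}\log g_{t}=r_{t}^{x_{0}}$. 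Since $r^{x_{0}}$ is a semimartingale, $t\mapsto\int_{t_{0}}^{t}r_{u}^{x_{0}}\,du$ is a.s.\ of finite variation, Stratonovich and Lebesgue integration agree, and $g_{t}=g_{t_{0}}\exp\!\big(\int_{t_{0}}^{t}r_{u}^{x_{0}}\,du\big)$. The factor $\exp\!\big(-\int_{t_{0}}^{t}r_{u}^{x_{0}}\,du\big)$ is precisely the stochastic discount factor of the portfolio $x_{0}$ over $[t_{0},t]$ --- the analogue, with $r^{x_{0}}$ in place of $r^{0}$, of the cash account $S^{0}_{t}=\exp(\int_{0}^{t}r^{0}_{u}\,du)$ --- so parallel transport in $t$ is multiplication by its reciprocal, i.e.\ division by the stochastic discount factor. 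This would prove both assertions.

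The step I expect to be the only real obstacle is the first one: pinning down a clean meaning of ``stochastic parallel transport'' for this (infinite-dimensional, abelian) structure group, and checking that, restricted to each of the two directions, it genuinely reduces to the pathwise formulas above with \emph{no} It\^o--Stratonovich correction term. This is where one leans on the general existence-uniqueness theory quoted above, together with the two structural facts that make a correction impossible in the first place: along $x$-lines the driving curve is deterministic and the fibre equation is linear, while along the $t$-line the driver $r^{x_{0}}_{t}\,dt$ carries no quadratic variation. Everything afterwards is the elementary observation that $D^{x}_{t}$ is linear --- hence logarithmically exact --- in $x$, and that $r^{x}$ is by definition the portfolio short rate, so that exponentiating it produces the discount factor.
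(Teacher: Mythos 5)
Your proof is correct and is essentially the argument the paper defers to \cite{Fa15}: write the horizontality condition $\mathcal{D}g=-\chi(t,x,g).(\mathcal{D}x,\mathcal{D}t)$, then integrate separately along an $x$-line (where linearity of $x\mapsto D_t^x$ makes the $x$-part of $\chi$ the exact form $d_x\log D_t^x$, yielding the exchange-rate factor $D_{t_0}^{x_0}/D_{t_0}^{x_1}$) and along the $t$-line (where the driver $r_t^{x}\,dt$ has finite variation, yielding $\exp(\int r_u^{x_0}\,du)$, the reciprocal of the stochastic discount factor). Your closing care about the absence of It\^o--Stratonovich corrections in each of the two restricted directions is exactly the right point to flag, and your treatment of it is sound.
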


Recall that time derivatives needed to define the parallel transport
along the time lines have to be understood in Stratonovich's sense.
We see that the bundle is trivial, because it has a global
trivialization, but the connection is not trivial.

\subsubsection{Nelson $\mathcal{D}$ Weak Differentiable Market Model} We continue to reformulate the classic asset model introduced in Subsection \ref{StochasticPrelude} in terms of stochastic differential geometry.
\begin{definition}\label{weakMM}
 A \textbf{Nelson $\mathcal{D}$ weak differentiable market model} for $N$ assets is described by $N$ gauges which are Nelson $\mathcal{D}$ weak differentiable with respect to the time variable. More exactly, for all $t\in[0,+\infty[$ and $s\ge t$ there is an open time interval $I\ni t$ such that for the deflators $D_t:=[D_t^1,\dots,D_t^N]^{\dagger}$ and the term structures $P_{t,s}:=[P_{t,s}^1,\dots,P_{t,s}^N]^{\dagger}$, the latter seen as processes in $t$ and parameter $s$, there exist a $\mathcal{D}$ weak $t$-derivative (see Appendix \ref{Derivatives}). The short rates are defined by $r_t:=\lim_{s\rightarrow t^{-}}\frac{\partial}{\partial s}\log P_{ts}$.\par
 A strategy is a curve $\gamma:I\rightarrow X$ in the portfolio space parameterized by the time. This means that the allocation at time $t$ is given by the vector of nominals $x_t:=\gamma(t)$. We denote by $\bar{\gamma}$ the lift of $\gamma$ to $M$, that is $\bar{\gamma}(t):=(\gamma(t),t)$. A strategy is said to be \textbf{closed} if it represented by a closed curve.  A \textbf{weak $\mathcal{D}$-admissible strategy} is predictable and $\mathcal{D}$- weak differentiable.
\end{definition}
\begin{remark}
We require weak $\mathcal{D}$-differentiability and not strong $\mathcal{D}$-differentiability because imposing a priori regularity properties on the trading strategies corresponds to restricting the class of admissible strategies with respect to the classical notion of Delbaen
and Schachermayer. Every (no-)arbitrage consideration depends crucially on the chosen definition
of admissibility. Therefore, restricting the class of admissible strategies may lead to
the automatic exclusion of potential arbitrage opportunities, leading to vacuous statements of FTAP-like results. An admissibile strategy in the classic sense (see Section \ref{section2}) is weak $\mathcal{D}$-differentiable.
\end{remark}
\noindent In general the allocation can depend on the state of the nature i.e. $x_t=x_t(\omega)$ for $\omega\in\Omega$.
\begin{proposition}
A weak $\mathcal{D}$-admissible strategy is self-financing if and only if
\begin{equation}\label{sf}
\mathcal{D}(x_t\cdot D_t)=x_t\cdot \mathcal{D}D_t-\frac{1}{2}\mathfrak{D}_*\left<x,D\right>_t\text{ or }
\mathcal{D}x_t\cdot D_t=-\frac{1}{2}\mathfrak{D}_*\left<x,D\right>_t\text{ or }
\mathfrak{D}x_t\cdot D_t=0,
\end{equation}
almost surely. The bracket $\left<\cdot,\cdot\right>$ denotes the continuous part of the quadratic covariation.
\end{proposition}

For the remainder of this paper unless otherwise stated we will deal
only with $\mathcal{D}$ differentiable market models, $\mathcal{D}$
differentiable strategies, and, when necessary, with $\mathcal{D}$
differentiable state price deflators. All It\^{o} processes are
$\mathcal{D}$ differentiable, so that the class of considered
admissible strategies is very large.

\subsubsection{Arbitrage as Curvature}
 The Lie algebra of $G$ is
\begin{equation}\mathfrak{g}=\mathbb{R}^{[0, +\infty[}\end{equation}
and therefore commutative. The $\mathfrak{g}$-valued curvature $2$-form is defined by means the $\mathfrak{g}$-valued connection
$1$-form as
\begin{equation}R:=d\chi+[\chi,\chi],\end{equation} meaning by this,
that for all $(t,x,g)\in \mathcal{B}$ and for all $\xi,\eta\in
T_{(t,x)}M$
\begin{equation}R(t,x,g)(\xi,\eta):=d\chi(t,x,g)(\xi,\eta)+[\chi(t,x,g)(\xi),\chi(t,x,g)(\eta)]=d\chi(t,x,g)(\xi,\eta). \end{equation}
Remark that, being the Lie algebra commutative, the Lie bracket
$[\cdot,\cdot]$ vanishes.
\begin{proposition}[\textbf{Curvature Formula}]\label{curvature}
Let $R$ be the curvature. Then, the following quality holds:
\begin{equation}R(t,x,g)=g dt\wedge d_x\left[\mathcal{D} \log (D_t^x)+r_t^x\right].\end{equation}
\end{proposition}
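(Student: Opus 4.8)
The plan is to compute the curvature straight from its definition $R = d\chi + [\chi,\chi]$ together with the explicit connection $1$-form. Since the structure group $G$ is abelian, its Lie algebra $\mathfrak{g} = \mathbb{R}^{[0,+\infty[}$ is commutative, so $[\chi,\chi]\equiv 0$ and $R = d\chi$. Write the connection as $\chi = \omega\, g$, where $\omega := \frac{1}{D_t^x}\sum_{j=1}^N D_t^j\, dx_j - r_t^x\, dt$ is a scalar $1$-form in the base variables $(t,x)$ and $g$ is the fibre coordinate; then $d\chi = dg\wedge\omega + g\, d_M\omega$. Evaluating $R(t,x,g)$ on vectors $(\delta x,\delta t)\in T_{(t,x)}M$ lifted to $\mathcal{B}$ at fixed fibre coordinate (so that their $dg$-component vanishes), the first summand contributes nothing and we are reduced to $R(t,x,g) = g\, d_M\omega$. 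Thus the whole content is the computation of $d_M\omega$.

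I would compute $d_M\omega$ term by term, using that along the time direction the exterior derivative is taken with Nelson's $\mathcal{D}$-operator, and that — the connection being assembled from Stratonovich integrals — the ordinary Leibniz and chain rules remain valid. For the $dt$-part, $d(-r_t^x\, dt) = -\sum_{j}\partial_{x_j} r_t^x\, dx_j\wedge dt = \sum_j \partial_{x_j} r_t^x\, dt\wedge dx_j$. For the $dx$-part, $d\Big(\sum_j \frac{D_t^j}{D_t^x}\, dx_j\Big) = \sum_j \mathcal{D}\Big(\frac{D_t^j}{D_t^x}\Big)\, dt\wedge dx_j + \sum_{j,k}\partial_{x_k}\Big(\frac{D_t^j}{D_t^x}\Big)\, dx_k\wedge dx_j$; since $D_t^j$ does not depend on $x$ and $\partial_{x_k} D_t^x = D_t^k$, the coefficient $\partial_{x_k}(D_t^j/D_t^x) = -D_t^j D_t^k/(D_t^x)^2$ is symmetric in $(j,k)$, so the $dx_k\wedge dx_j$ terms cancel. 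Combining, $d_M\omega = \sum_{j=1}^N\big[\mathcal{D}(D_t^j/D_t^x) + \partial_{x_j} r_t^x\big]\, dt\wedge dx_j$; inserting the elementary identities $\partial_{x_j} r_t^x = \frac{D_t^j}{D_t^x}(r_t^j - r_t^x)$ and $\mathcal{D}(D_t^j/D_t^x) = \frac{D_t^j}{D_t^x}(\mathcal{D}\log D_t^j - \mathcal{D}\log D_t^x)$ and multiplying by $g$ reproduces the intermediate expression $R = \frac{g}{D_t^x}\sum_j D_t^j\big(r_t^x + \mathcal{D}\log D_t^x - r_t^j - \mathcal{D}\log D_t^j\big)\, dx_j\wedge dt$.

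To reach the compact form I would use that $\frac{D_t^j}{D_t^x} = \partial_{x_j}\log D_t^x$, so $\mathcal{D}(D_t^j/D_t^x) = \mathcal{D}\,\partial_{x_j}\log D_t^x = \partial_{x_j}\,\mathcal{D}\log D_t^x$, the last step being the commutation of the stochastic time derivative with the deterministic parameter derivative $\partial_{x_j}$. Then $\mathcal{D}(D_t^j/D_t^x) + \partial_{x_j} r_t^x = \partial_{x_j}\big[\mathcal{D}\log D_t^x + r_t^x\big]$, so $d_M\omega = \sum_j \partial_{x_j}\big[\mathcal{D}\log D_t^x + r_t^x\big]\, dt\wedge dx_j = dt\wedge d_x\big[\mathcal{D}\log D_t^x + r_t^x\big]$, and therefore $R(t,x,g) = g\, dt\wedge d_x\big[\mathcal{D}\log(D_t^x) + r_t^x\big]$.

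The obstacle is not the algebra but the stochastic bookkeeping: one must be sure that ordinary exterior calculus is legitimate when $dt$ is paired with Nelson's $\mathcal{D}$ rather than an ordinary time derivative — in particular the Leibniz rule, the cancellation of the symmetric $dx_k\wedge dx_j$ terms, and above all the interchange $\mathcal{D}\,\partial_{x_j} = \partial_{x_j}\,\mathcal{D}$. This is exactly where the Stratonovich convention pays off: $\mathcal{D}$ then obeys a first-order calculus formally identical to the deterministic case, and provided $D_t^x$ is sufficiently regular in $(t,x,\omega)$ to differentiate under $\mathcal{D}$ with respect to the parameter $x$, every manipulation above mirrors the standard deterministic gauge-theoretic curvature computation.
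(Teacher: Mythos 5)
Your computation is correct and follows exactly the route the paper indicates: since $\mathfrak{g}$ is abelian the bracket term vanishes, $R=d\chi$ is computed term by term (yielding the paper's intermediate expression $\frac{g}{D_t^x}\sum_j D_t^j(r_t^x+\mathcal{D}\log D_t^x-r_t^j-\mathcal{D}\log D_t^j)\,dx_j\wedge dt$), and the compact form follows from $\frac{D_t^j}{D_t^x}=\partial_{x_j}\log D_t^x$ and the interchange of $\mathcal{D}$ with $\partial_{x_j}$. The paper itself only writes ``after some calculations'' and defers details to the references, so your argument supplies precisely the omitted steps, with the right attention to the Stratonovich convention that makes the first-order calculus legitimate.
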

\noindent The following result  characterizes arbitrage as curvature.
\begin{theorem}[\textbf{No Arbitrage}]\label{holonomy}
The following assertions are equivalent:
\begin{itemize}
\item [(i)] The market model (with base assets and futures with discounted prices $D$  and $P$) satisfies the no-free-lunch-with-vanishing-risk condition.
\item[(ii)] There exists a positive semimartingale $\beta=(\beta_t)_{t\ge0}$ such that deflators and short rates satisfy for all portfolio nominals and all times the condition
\begin{equation}r_t^x=-\mathcal{D}\log(\beta_tD_t^x).\end{equation}
\item[(iii)] There exists a positive semimartingale $\beta=(\beta_t)_{t\ge0}$ such that deflators and term structures satisfy for all portfolio nominals and all times the condition
\begin{equation}P^x_{t,s}=\frac{\mathbb{E}_t[\beta_sD^x_s]}{\beta_tD^x_t}.\end{equation}
\end{itemize}
\end{theorem}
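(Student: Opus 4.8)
The linchpin of all three statements is the candidate deflator $\beta$: for fixed portfolio nominals $x\in\mathfrak{X}$ and fixed maturity $s\ge t$, the process $t\mapsto \beta_t D_t^x P^x_{t,s}$ plays the role of the deflated, discounted price of the synthetic zero-coupon bond written on the $x$-instrument, and since $P^x_{s,s}=1$ it should close at $\beta_s D_s^x$. The plan is to run the cycle (i) $\Rightarrow$ (iii) $\Rightarrow$ (ii) $\Rightarrow$ (i), using the First Fundamental Theorem (Theorem~\ref{FFTAP}) to cross between arbitrage and the existence of $\beta$, and elementary calculus in the maturity variable $s$ (via the forward-rate representation $P^x_{t,s}=\exp(-\int_t^s f^x_{t,u}\,du)$ and Nelson's forward derivative, Appendix~\ref{Derivatives}) to pass between the ``integrated'' form (iii) and the ``infinitesimal'' form (ii). The Curvature Formula of Proposition~\ref{curvature} then serves as a consistency check, since rewriting (ii) as $\mathcal{D}\log D_t^x+r_t^x=-\mathcal{D}\log\beta_t$ — a quantity independent of $x$ — forces $d_x[\mathcal{D}\log D_t^x+r_t^x]=0$ and hence $R\equiv 0$, which is exactly the ``arbitrage as curvature'' slogan.

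For (i) $\Rightarrow$ (iii): apply Theorem~\ref{FFTAP} to the enlarged market consisting of the base assets together with all synthetic zero-coupon bonds $(D_t^j P^j_{t,s})$. The (NFLVR) condition yields an equivalent local martingale measure $P^*$; let $\beta$ be its density process with respect to $P$, a strictly positive semimartingale. Then $\beta_t D_t^j$ and $\beta_t D_t^j P^j_{t,s}$ are $P$-local martingales; transporting this to general nominals via the identity $P^x_{t,s}=\prod_{j}(P^j_{t,s})^{x_jD_t^j/D_t^x}$ coming from $f^x_{t,u}=\sum_j\frac{x_jD_t^j}{D_t^x}f^j_{t,u}$, and taking conditional expectations (invoking closedness of the relevant local martingales, cf.\ the Remark after Theorem~\ref{FFTAP}), gives $\beta_tD_t^xP^x_{t,s}=\mathbb{E}_t[\beta_sD_s^x]$, i.e.\ (iii).

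For (iii) $\Rightarrow$ (ii): differentiate the identity in (iii) with respect to $s$ and evaluate at $s=t^+$. Using $P^x_{t,t}=1$, $f^x_{t,s}=-\partial_s\log P^x_{t,s}$ and $r_t^x=\lim_{s\to t^+}f^x_{t,s}$, the left-hand side yields $-r_t^x$; by the definition of Nelson's forward derivative, $\partial_s\mathbb{E}_t[\beta_sD_s^x]\big|_{s=t^+}=\mathcal{D}(\beta_tD_t^x)$, so the right-hand side yields $\mathcal{D}\log(\beta_tD_t^x)$, whence $r_t^x=-\mathcal{D}\log(\beta_tD_t^x)$. For (ii) $\Rightarrow$ (i): from (ii), $\beta$ is a strictly positive semimartingale; reconstruct (iii) by integrating the relation back up in $s$ along $P^x_{t,s}=\exp(-\int_t^s f^x_{t,u}\,du)$ (the reverse of the previous step, which additionally uses that once $\beta$ is fixed the arbitrage-free term structure $f^x$ obeys a Heath--Jarrow--Morton-type drift condition), so that each $\beta_tD_t^xP^x_{t,s}$ is a $P$-local martingale; normalizing $\beta$ to a probability density defines $P^*$ under which the base assets and all synthetic bonds are local martingales, and Theorem~\ref{FFTAP} delivers (NFLVR).

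I expect two main obstacles. First, the local-martingale-versus-true-martingale gap: $\mathbb{E}_t[\beta_sD_s^x]$ need not be finite and the bond-price local martingale need not be closed — precisely the asset-price bubble phenomenon noted after Theorem~\ref{FFTAP} — so (iii) must be read with the appropriate qualifier, and the differentiation and integration in $s$ must be justified pathwise in the $\mathcal{D}'$ sense that the Nelson $\mathcal{D}$-differentiability hypothesis guarantees. Second, the reduction from the base-asset gauges $e_j$ to general portfolio nominals: because $P^x$ is a $D$-weighted \emph{geometric} average of the $P^j$ rather than a linear combination, transporting the (local) martingale property of $\beta_tD_t^jP^j_{t,s}$ to that of $\beta_tD_t^xP^x_{t,s}$ has to be done through the forward-rate representation and the self-financing relation \eqref{sf}, keeping careful track of the Itô correction terms hidden inside $\mathcal{D}\log(\cdot)$.
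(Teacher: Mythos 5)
The paper itself gives no proof of Theorem~\ref{holonomy}: Section~2 explicitly defers all proofs to \cite{Fa15} and \cite{FaTa19}, so there is no in-text argument to compare yours against line by line. Judged on its own terms, your cycle (i)$\Rightarrow$(iii)$\Rightarrow$(ii)$\Rightarrow$(i) is the natural one, and the two ``obstacles'' you list at the end are indeed the right places to worry --- but in both cases you name the difficulty without overcoming it, and in one case it cannot be overcome by the means you propose.

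First, the reduction from the basis gauges to general nominals in (i)$\Rightarrow$(iii) genuinely fails as described. From $f^x_{t,u}=\sum_j\frac{x_jD^j_t}{D^x_t}f^j_{t,u}$ one gets $P^x_{t,s}=\prod_j(P^j_{t,s})^{x_jD^j_t/D^x_t}$, a $D$-weighted \emph{geometric} mean, whereas $\mathbb{E}_t[\beta_sD^x_s]/(\beta_tD^x_t)=\sum_j\frac{x_jD^j_t}{D^x_t}\,\mathbb{E}_t[\beta_sD^j_s]/(\beta_tD^j_t)$ is the corresponding \emph{arithmetic} mean; these disagree pathwise for $s>t$ unless the $P^j_{t,s}$ coincide, so the identity $\beta_tD^x_tP^x_{t,s}=\mathbb{E}_t[\beta_sD^x_s]$ for all $x$ does not follow from its validity at the basis vectors, and no bookkeeping of It\^{o} corrections can repair a pointwise algebraic mismatch. (The analogous reduction \emph{does} work for (ii), because $r^x_t$ and $\mathcal{D}\log(\beta_tD^x_t)=\mathcal{D}\log\beta_t+\mathcal{D}D^x_t/D^x_t$ are both the same $D$-weighted arithmetic means of the per-asset quantities, thanks to the chain rule satisfied by $\mathcal{D}$; this asymmetry is precisely why the passage to general $x$ should be argued at the infinitesimal level of short rates and curvature and only afterwards integrated up in $s$ for each fixed $x$.) Second, your (ii)$\Rightarrow$(i) is circular: condition (ii) constrains only $r^x_t=\lim_{s\to t^+}f^x_{t,s}$, the short end of the forward curve, and says nothing about $f^x_{t,s}$ for $s>t$; the Heath--Jarrow--Morton-type drift condition you invoke in order to ``integrate back up in $s$'' and make $\beta_tD^x_tP^x_{t,s}$ a local martingale is essentially the content of (iii) itself, so it must be derived rather than assumed. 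A smaller point of the same flavour: differentiating $\mathbb{E}_t[\beta_sD^x_s]$ in $s$ at $s=t^+$ produces Nelson's \emph{forward} derivative $D$, while (ii) is stated with the \emph{mean} derivative $\mathcal{D}$; the two differ by the quadratic-variation correction that the self-financing identity (\ref{sf}) keeps explicit, and your (iii)$\Rightarrow$(ii) step silently identifies them.
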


\noindent This motivates the following definition.
\begin{defi}
The market model satisfies the \textbf{zero curvature (ZC)} if and only if
the curvature vanishes a.s.
\end{defi}

\noindent Therefore, we have following implications relying  two different definitions of no-abitrage:
\begin{corollary}
\begin{equation}
\text{(NFLVR)}\Rightarrow \text{(ZC)}.
\end{equation}
\end{corollary}

\begin{remark}
The positive semimartingale $\beta=(\beta_t)_{t\ge0}$ in Theorem \ref{holonomy} is termed pricing kernel or state price deflator throughout the literature.
\end{remark}

\section{Asset and Market Portfolio Dynamics as a Constrained Lagrangian System}\label{Hamilton}
In \cite{Fa15} and \cite{Fa20} the minimal arbitrage principle, stating that asset dynamics and market portfolio choose the path
guaranteeing the minimization of arbitrage, was encoded as the Hamilton principle under constraints for a Lagrangian measuring the arbitrage.
Then, the SDE describing asset deflators, term structures and market portfolio were derived by means of a stochasticization procedure
of the Euler-Lagrange equations following a technique developed by Cresson and Darses (\cite{CrDa07} who follow previous works of Yasue (\cite{Ya81}) and Nelson (\cite{Ne01}).
Since we need this set up to proceed with its quantization, we briefly summarize it here below.
\begin{definition}
Let $\gamma$ be the market $\mathcal{D}$-admissible strategy,
 and $\delta\gamma$, $\delta D$,
$\delta r$ be perturbations of the market strategy,
deflators' and short rates' dynamics. The \textbf{variation} of
$(\gamma,D,r)$ with respect to the given perturbations is the
following one parameter family:
\begin{equation}\epsilon\longmapsto(\gamma^\epsilon,D^\epsilon,r^\epsilon):=(\gamma,D,r)+ \epsilon(\delta \gamma,\delta D,\delta r).\end{equation}
Thereby, the parameter $\epsilon$ belongs to some open neighborhood
of $0\in\mathbb{R}$. The \textbf{arbitrage action} with respect to a positive semimartingale $\beta$ can be
consistently defined by
\begin{equation}\label{int_arb}
\begin{split}
A^{\beta}(\gamma;D,r)&:=\int_{\gamma}dt\BR{\mathcal{D}\log(\beta_tD_t^{x_t})+r_t^{x_t}}=\\
&=\int_0^Tdt\frac{x_t\cdot (\mathcal{D}D_t+r_tD_t)}{x_t\cdot D_t}+\log\frac{\beta_1}{\beta_0},
\end{split}
\end{equation}where $x=x_t$ is an admissible self-financing strategy taking values on the curve $\gamma$, and the first
variation of the arbitrage action as
\begin{equation}\delta A^{\beta}(\gamma;D,r):=\frac{d}{d\epsilon}A^{\beta}(\gamma^{\epsilon};D^{\epsilon},r^{\epsilon})\mid_{\epsilon:=0}.\end{equation}
\end{definition}
\noindent This leads to the following
\begin{definition}
Let us introduce the notation $q:=(x,D,r)$ and
$q^{\prime}:=(x^{\prime},D^{\prime},r^{\prime})$ for two vectors in
$\mathbb{R}^{3N}$. The \textbf{Lagrangian (or Lagrange function)} is
defined as
\begin{equation}\label{Lagrangian}
L(q,q^{\prime}):=L(x,D,r,x^{\prime},D^{\prime},r^{\prime}):=\frac{x\cdot (D^{\prime}+ rD)}{x\cdot D}.
\end{equation}
The self-financing \textbf{constraint} is defined as
\begin{equation}
C(q,q^{\prime}):=x^{\prime}\cdot D.
\end{equation}
\end{definition}

\begin{lemma}
The arbitrage action for a self-financing
strategy $\gamma$ is the integral of the Lagrange function along the
$\mathcal{D}$-admissible strategy:
\begin{equation}
A^{\beta}(\gamma;D,r)=\int_{\gamma}dt\,L(q_t,q_t^{\prime})+\log\frac{\beta_1}{\beta_0}=\int_{\gamma}dt\,L(x_t,D_t,r_t,x_t^{\prime},D_t^{\prime},r_t^{\prime})+\log\frac{\beta_1}{\beta_0}.
\end{equation}
\end{lemma}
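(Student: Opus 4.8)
The plan is to verify the claimed identity by direct substitution, tracking how the self-financing constraint eliminates the quadratic covariation term. First I would start from the definition of the arbitrage action in the first line of (\ref{int_arb}),
\begin{equation*}
A^{\beta}(\gamma;D,r)=\int_{\gamma}dt\,|x_t^{\prime}|\left[\mathcal{D}\log(\beta_t D_t^{x_t})+r_t^{x_t}\right],
\end{equation*}
and expand $\mathcal{D}\log(\beta_t D_t^{x_t})=\mathcal{D}\log\beta_t+\mathcal{D}\log D_t^{x_t}$. The first piece integrates to $\log(\beta_1/\beta_0)$ along $\gamma$ (using that time is the arc-length parameter, so $|x_t'|\,dt$ is compatible with the endpoints $t=0,1$; more precisely the $|x_t'|$ cancels against the arc-length normalization so that $\int_{\gamma}dt\,|x_t'|\,\mathcal{D}\log\beta_t=\log\beta_1-\log\beta_0$), matching the additive $\log(\beta_1/\beta_0)$ term in the statement. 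It then remains to show the $\beta$-independent part equals $\int_{\gamma}dt\,L(q_t,q_t')$.

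Second, I would compute $\mathcal{D}\log D_t^{x_t}+r_t^{x_t}$ explicitly using $D_t^{x}=x_t\cdot D_t$ and the formula $r_t^{x}=\sum_j \frac{x_j D_t^j}{x_t\cdot D_t}r_t^j=\frac{x_t\cdot(r_t D_t)}{x_t\cdot D_t}$ from the earlier short-rate definition. By the Nelson product rule for $\mathcal{D}$, $\mathcal{D}(x_t\cdot D_t)=x_t\cdot\mathcal{D}D_t+\mathcal{D}x_t\cdot D_t+D_*\langle x,D\rangle_t$, so
\begin{equation*}
\mathcal{D}\log D_t^{x_t}+r_t^{x_t}=\frac{x_t\cdot\mathcal{D}D_t+\mathcal{D}x_t\cdot D_t+D_*\langle x,D\rangle_t+x_t\cdot(r_tD_t)}{x_t\cdot D_t}.
\end{equation*}
Now I invoke the self-financing characterization (\ref{sf}), in the form $\mathcal{D}x_t\cdot D_t=-\tfrac12 D_*\langle x,D\rangle_t$, so the term $\mathcal{D}x_t\cdot D_t+D_*\langle x,D\rangle_t$ collapses to $\tfrac12 D_*\langle x,D\rangle_t$. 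This reproduces exactly the numerator appearing in the second line of (\ref{int_arb}). The remaining subtlety is the passage from that expression to the \emph{deterministic} Lagrangian $L(q,q')=|x'|\,\frac{x\cdot(D'+rD)}{x\cdot D}$: as flagged in the Remark, the quadratic-covariation contribution $\tfrac12 D_*\langle x,D\rangle_t$ is precisely what is dropped when passing to the deterministic Lagrange function whose integral along $\gamma$ (reinserting the $|x_t'|$ factor from the arc-length/line-integral normalization) gives $L(q_t,q_t')$; equivalently, $L$ is the deterministic shadow of the stochastic integrand, and along a self-financing strategy the two differ only by the covariation term that vanishes in the deterministic limit.

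The main obstacle I anticipate is bookkeeping around the arc-length parametrization and the role of $|x_t'|$: one must be careful that $\int_\gamma dt\,|x_t'|[\cdots]$ is a genuine line integral over the curve $\gamma$, that the $\beta$-contribution telescopes cleanly to $\log(\beta_1/\beta_0)$ regardless of this factor, and that the identification of the remaining integrand with $L(q_t,q_t')$ respects the $|x'|$ weighting — since $L$ itself carries an explicit $|x'|$, this is consistent, but it requires stating precisely in what sense ``$dt$'' is the arc-length element. A secondary point is to state unambiguously which form of (\ref{sf}) is being used and that it applies pathwise under the Nelson-$\mathcal{D}$ calculus. Once these normalizations are pinned down, the computation is a short substitution, so I would present it as: expand the logarithm, peel off the $\beta$-term, apply the $\mathcal{D}$-product rule, substitute the self-financing constraint to kill half the covariation, and observe that the deterministic Lagrangian is obtained by discarding the residual covariation term per the Remark.
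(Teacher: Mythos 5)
Your overall strategy --- peel off the $\beta$-contribution, reduce $\mathcal{D}\log D_t^{x_t}+r_t^{x_t}$ to the quotient $\bigl(\mathcal{D}(x_t\cdot D_t)+x_t\cdot(r_tD_t)\bigr)/(x_t\cdot D_t)$ via the self-financing condition, use the arc-length normalization $|x_t'|=1$, and invoke the Remark to discard the covariation term when identifying the integrand with the deterministic Lagrangian --- is the right one, and is essentially the only route the definitions allow. However, there is a concrete error in the middle step: the product rule you assert for the Nelson mean derivative, $\mathcal{D}(x_t\cdot D_t)=x_t\cdot\mathcal{D}D_t+\mathcal{D}x_t\cdot D_t+D_*\langle x,D\rangle_t$, is not the one the paper is using and is inconsistent with equation (\ref{sf}) itself: the two displayed forms of the self-financing condition in (\ref{sf}) are equivalent only if $\mathcal{D}(x_t\cdot D_t)=x_t\cdot\mathcal{D}D_t+\mathcal{D}x_t\cdot D_t$ with \emph{no} covariation correction (the mean derivative $\mathcal{D}=\tfrac12(D+D_*)$ is of Stratonovich type, so the forward and backward corrections cancel; the rule you wrote is essentially the forward-derivative rule). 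With your rule, the substitution $\mathcal{D}x_t\cdot D_t=-\tfrac12 D_*\langle x,D\rangle_t$ yields the numerator $x_t\cdot\mathcal{D}D_t+\tfrac12 D_*\langle x,D\rangle_t+x_t\cdot(r_tD_t)$, i.e.\ the \emph{wrong sign} on the covariation term compared with the second line of (\ref{int_arb}), which carries $-\tfrac12 D_*\langle x_t,D_t\rangle_t$. So your sentence ``this reproduces exactly the numerator appearing in the second line of (\ref{int_arb})'' is false as written.

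The fix is immediate: either use the correct Stratonovich-type product rule, or bypass the product rule entirely by quoting the \emph{first} form of (\ref{sf}), $\mathcal{D}(x_t\cdot D_t)=x_t\cdot\mathcal{D}D_t-\tfrac12 D_*\langle x,D\rangle_t$, dividing by $x_t\cdot D_t$ and adding $r_t^{x_t}=\frac{x_t\cdot(r_tD_t)}{x_t\cdot D_t}$. After this correction the rest of your argument goes through: the $\beta$-term integrates to $\log(\beta_1/\beta_0)$, $|x_t'|\equiv 1$ under the arc-length convention (so the weighting is consistent with the explicit $|x'|$ factor carried by $L$), and the residual $-\tfrac12 D_*\langle x,D\rangle_t$ is precisely the term the Remark declares absent from the deterministic Lagrangian. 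I would also encourage you to state the $\beta$-step more carefully than ``the $|x_t'|$ cancels against the arc-length normalization'': the clean statement is simply that $|x_t'|=1$ pointwise, so $\int_0^1 dt\,\mathcal{D}\log\beta_t=\log\beta_1-\log\beta_0$.
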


\noindent A fundamental result of classical mechanics allows to
compute the extrema of the arbitrage action in the
\textit{deterministic} case as the solution of a system of ordinary
differential equations.

\begin{theorem}[\textbf{Hamilton Principle}]\label{ham}
Let us denote the derivative with respect to time as
$\frac{d}{dt}=:\prime$ and assume that all quantities observed are
deterministic. The local extrema of the arbitrage action satisfy the
Lagrange equations under the self-financing constraints
\begin{equation}\label{hamilton_principle}
\left\{
\begin{array}{l}
\delta A^{\beta}(\gamma;D,r)=0\text{ for all } (\delta\gamma,\delta D, \delta r)\\
\text{such that }\; {x_t^{\prime}}^{\epsilon}\cdot D^{\epsilon }_t=0 \text{ for all }\epsilon
\end{array}
\right.
\Longleftrightarrow
\left\{
       \begin{array}{l}
        \frac{d}{dt}\frac{\partial L_{\lambda}}{\partial q^{\prime}}-\frac{\partial L_{\lambda}}{\partial q}=0\\
        C(q,q^{\prime}):=x^{\prime}\cdot D=0
       \end{array}
\right.
\end{equation}
\noindent where $\lambda\in\mathbb{R}$ denotes the the self-financing constraint Lagrange multiplier and $L_{\lambda}:=L-\lambda C$.
\end{theorem}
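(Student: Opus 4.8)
\emph{Proof outline.} The plan is to recognize this statement as the classical Lagrange multiplier rule of the calculus of variations --- the ``problem of Lagrange'' with a Pfaffian (velocity-linear) constraint --- and to handle the two implications of the equivalence separately. First I would use the preceding Lemma to write the arbitrage action along a self-financing strategy as $A^{\beta}(\gamma;D,r)=\int_0^1 dt\,L(q_t,q_t^{\prime})+\log(\beta_1/\beta_0)$. Since the variation moves only $(\gamma,D,r)$ and leaves $\beta$ fixed, the endpoint term $\log(\beta_1/\beta_0)$ is constant along the family $(\gamma^{\epsilon},D^{\epsilon},r^{\epsilon})$, so $\delta A^{\beta}=\delta\int_0^1 L\,dt$ and everything reduces to the deterministic Lagrangian $L$ on $\mathbb{R}^{3N}\times\mathbb{R}^{3N}$. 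Taking, as is standard for Hamilton's principle, perturbations $\delta q=(\delta x,\delta D,\delta r)$ that vanish at $t=0,1$, the first-variation formula plus one integration by parts give $\delta\int_0^1 L\,dt=\int_0^1\big(\tfrac{\partial L}{\partial q}-\tfrac{d}{dt}\tfrac{\partial L}{\partial q^{\prime}}\big)\cdot\delta q\,dt$ with no boundary contribution. Differentiating the admissibility requirement ${x_t^{\prime}}^{\epsilon}\cdot D_t^{\epsilon}=0$ in $\epsilon$ at $\epsilon=0$ shows that the admissible $\delta q$ are exactly those satisfying $D_t\cdot\delta x_t^{\prime}+x_t^{\prime}\cdot\delta D_t=0$, i.e. the linearized constraint $\tfrac{\partial C}{\partial q}\cdot\delta q+\tfrac{\partial C}{\partial q^{\prime}}\cdot\delta q^{\prime}=0$ pointwise in $t$; and the unperturbed path obeys $C(q_t,q_t^{\prime})=x_t^{\prime}\cdot D_t=0$ because it is self-financing.

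For the implication ``$\Leftarrow$'' I would argue directly. If the Euler--Lagrange equations for $L_{\lambda}:=L-\lambda C$ hold along the path and $C(q_t,q_t^{\prime})\equiv0$, then for any admissible perturbation $\delta\int_0^1 L\,dt=\delta\int_0^1(L_{\lambda}+\lambda C)\,dt=\delta\int_0^1 L_{\lambda}\,dt+\int_0^1\lambda\,\delta C\,dt$; the first term vanishes by the integration-by-parts identity applied to $L_{\lambda}$ together with its Euler--Lagrange equations and the vanishing of $\delta q$ at the endpoints, while the second vanishes because admissibility forces $\delta C=0$ pointwise. Hence $\delta A^{\beta}=0$ for all admissible $(\delta\gamma,\delta D,\delta r)$, which is the left-hand side.

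The substantive direction is ``$\Rightarrow$'', where the multiplier has to be produced. Here I would invoke the fundamental lemma of the calculus of variations in the presence of a velocity-linear constraint: near the path the constraint is regular because $\tfrac{\partial C}{\partial x^{\prime}}=D_t\not\equiv0$ --- indeed the portfolio value $x_t\cdot D_t$ sits in the denominator of $L$, hence is nonzero along the path and in particular $D_t\neq0$ --- so one velocity component of $\delta x_t^{\prime}$ can be eliminated via the linearized constraint, and $\delta\int_0^1 L\,dt=0$ for the remaining free variations yields, by the du Bois--Reymond argument, a multiplier $\lambda$ with $\tfrac{\partial L}{\partial q}-\tfrac{d}{dt}\tfrac{\partial L}{\partial q^{\prime}}=\lambda\big(\tfrac{\partial C}{\partial q}-\tfrac{d}{dt}\tfrac{\partial C}{\partial q^{\prime}}\big)$, i.e. the Euler--Lagrange equations for $L_{\lambda}$; combined with the self-financing identity $C(q_t,q_t^{\prime})=0$ already noted, this is the right-hand system. (The general ``problem of Lagrange'' with a Pfaffian constraint produces a multiplier \emph{function} $\lambda=\lambda_t$; we keep the notation $\lambda$ of the statement.)

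The hard part will be exactly this multiplier construction: verifying that the self-financing constraint is nondegenerate along the extremal so that the constrained fundamental lemma applies, and carrying out every step locally, since $L$ is singular on the locus $x\cdot D=0$. The homogeneity of $L$ in the velocity $x^{\prime}$ (reparametrization invariance in the $x$-variable) and the associated arc-length gauge require some care in specifying the parametrization in which the Euler--Lagrange equations are written, but they do not obstruct the formal argument above.
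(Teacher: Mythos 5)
The paper does not actually prove Theorem \ref{ham}: Section 3 is announced as a summary of \cite{Fa15}, and the theorem is introduced only with the sentence that ``a fundamental result of classical mechanics'' yields it, i.e.\ it is quoted as the classical Lagrange multiplier rule with the proof deferred elsewhere. Your outline supplies exactly that classical argument (the vakonomic form of the problem of Lagrange with a Pfaffian constraint), and its structure is sound: discarding the $\log(\beta_1/\beta_0)$ term because $\beta$ is not varied, linearizing the admissibility condition ${x_t^{\prime}}^{\epsilon}\cdot D_t^{\epsilon}=0$ to $D_t\cdot\delta x_t^{\prime}+x_t^{\prime}\cdot\delta D_t=0$, the easy ``$\Leftarrow$'' direction via integration by parts plus $\delta C=0$, and the multiplier construction from the regularity condition $\partial C/\partial x^{\prime}=D_t\neq 0$ are all the right steps, and they are consistent with what the paper takes for granted.

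Two points deserve more than your parenthetical remarks. First, the du Bois--Reymond argument for a velocity-linear constraint produces a multiplier \emph{function} $\lambda_t$, and the Euler--Lagrange operator of $L-\lambda_t C$ then carries the extra term $\dot\lambda_t\,\partial C/\partial q^{\prime}$; your displayed identity $\partial L/\partial q-\tfrac{d}{dt}\partial L/\partial q^{\prime}=\lambda\bigl(\partial C/\partial q-\tfrac{d}{dt}\partial C/\partial q^{\prime}\bigr)$ silently commutes $\lambda$ past $d/dt$ and coincides with the right-hand system of (\ref{hamilton_principle}) only when $\lambda$ is constant. So what your argument actually delivers is the (correct) time-dependent version; matching the literal ``$\lambda\in\mathbb{R}$'' of the statement needs either an additional argument or the admission that the statement is imprecise on this point. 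Second, $L$ is positively homogeneous of degree one in $x^{\prime}$ and independent of $r^{\prime}$ altogether, so $\partial L_{\lambda}/\partial r^{\prime}\equiv 0$ and the $r$-component of the Euler--Lagrange system is an algebraic relation rather than an ODE; this degeneracy (which is exactly why the paper later treats the Hamiltonian system as constrained in Dirac's sense, with $p_r=0$) is where the ``eliminate one velocity component and apply the fundamental lemma'' step has to be verified rather than merely invoked. Neither point breaks your proof, but both are where the real work lies, and the second is currently dismissed in one sentence.
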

\noindent Let $L=L(q,q^{\prime})$ be the Lagrange function of a deterministic
Lagrangian system with the non holonomic constraint $C(q,q^{\prime})=0$. Setting $L_{\lambda}:=L-\lambda C$ for the constraint Lagrange multipliers the dynamics is given by the extended Euler-Lagrange
equations
\begin{equation}
\text{(EL)}\quad\left\{
         \begin{array}{l}
          \frac{d}{dt}\frac{\partial L_{\lambda}}{\partial q^\prime}(q,q^{\prime})-\frac{\partial L_{\lambda}}{\partial q}(q,q^{\prime})=0\\
          C(q,q^{\prime})=0
         \end{array}
       \right.
\end{equation}
meaning by this that the deterministic solution $q=q_t$ and $\lambda\in\mathbb{R}$ satisfy the constraint and
\begin{equation}\frac{d}{dt}\frac{\partial L_{\lambda}}{\partial q^\prime}\br{q_t, \frac{dq_t}{dt}}-\frac{\partial L_{\lambda}}{\partial
q}\br{q_t, \frac{dq_t}{dt}}=0.\end{equation}
\begin{defi}
 The formal
\textbf{stochastic embedding of the Euler-Lagrange equations} is
obtained by the formal substitution
\begin{equation}S:\frac{d}{dt}\longmapsto\mathcal{D},\end{equation}
and allowing the coordinates of the tangent bundle to be stochastic
\begin{equation}
\text{(SEL)}\quad\left\{
  \begin{array}{l}
    \mathcal{D}\frac{\partial L_{\lambda}}{\partial q^\prime}(q,q^{\prime})-\frac{\partial L_{\lambda}}{\partial q}(q,q^{\prime})=0\\
    C(q,q^{\prime})=0
  \end{array}
\right.
\end{equation}
meaning by this that the stochastic solution $Q=Q_t$ and the random variable $\lambda$ satisfy the constraint and
\begin{equation}\label{SEL}
\left\{
  \begin{array}{l}
  \mathcal{D}\frac{\partial L_{\lambda}}{\partial q^\prime}\br{Q_t, \mathcal{D}Q_t}-\frac{\partial L_{\lambda}}{\partial q}\br{Q_t, \mathcal{D}Q_t}=0\\
  C(Q_t,\mathcal{D}Q_t)=0.
  \end{array}
\right.
\end{equation}
\end{defi}
\noindent Let $L=L(q,q^{\prime})$ be the Lagrange function of a deterministic
Lagrangian system on a time interval $I$ with constraint $C=0$. Set
\begin{equation}\Xi:=\BR{Q\in\mathcal{C}^1(I)\mid\mathbb{E}\Br{\int_I|L_{\lambda}(Q_t,\mathcal{D}Q_t)|dt}<+\infty}.\end{equation}
\begin{defi}The action functional associated to $L_{\lambda}$ defined by
\begin{equation}
\begin{aligned}
&F: &\Xi\longrightarrow &\mathbb{R}\\
& &Q\longmapsto&\mathbb{E}\Br{\int_IL_{\lambda}(Q_t,\mathcal{D}Q_t)dt}
\end{aligned}
\end{equation}
 is called \textbf{stochastic analogue of the classic action} under the constraint $C=0$.
\end{defi}
For a sufficiently smooth extended Lagrangian $L_{\lambda}$ a necessary and sufficient
condition for a stochastic process to be a critical point of the
action functional $F$ is the fulfillment of the stochastic
Euler-Lagrange equations (SEL), as it can be seen in Theorem 7.1 page 54 in
\cite{CrDa07}. Moreover we have the following
\begin{lemma}[\textbf{Coherence}]
The following diagram commutes
\begin{equation}
    \xymatrix{
        L_{\lambda}(q_t,q_t^{\prime}) \ar[r]^{S} \ar[d]_{\text{Critical Action Principle}} & L_{\lambda}(Q_t,\mathcal{D}Q_t) \ar[d]^{\text{Stochastic Critical Action Principle}} \\
        (EL) \ar[r]_{S}       & (SEL) }
\end{equation}
\end{lemma}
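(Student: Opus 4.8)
The plan is to trace the two composite arrows of the square explicitly and observe that they land on the same system. The four corners are the deterministic extended Lagrangian $L_{\lambda}(q_t,q_t^{\prime})$, its stochastic embedding $L_{\lambda}(Q_t,\mathcal{D}Q_t)$ obtained via $S$, the deterministic Euler--Lagrange system (EL), and the stochastic system (SEL). Commutativity means that extremizing first and then stochasticizing yields the same pair of equations as stochasticizing first and then extremizing.

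Going \emph{down then right}: the classical critical action principle applied to $\int_I L_{\lambda}(q_t,q_t^{\prime})\,dt$ gives, by the standard first-variation computation with fixed endpoints, the equations (EL), i.e. $\frac{d}{dt}\frac{\partial L_{\lambda}}{\partial q^{\prime}}(q_t,q_t^{\prime})-\frac{\partial L_{\lambda}}{\partial q}(q_t,q_t^{\prime})=0$ together with the constraint $C(q_t,q_t^{\prime})=0$. Applying $S$ --- replacing $\frac{d}{dt}$ by $\mathcal{D}$, the tangent-bundle coordinates $q_t,q_t^{\prime}$ by the stochastic pair $Q_t,\mathcal{D}Q_t$, and the deterministic multiplier by a random variable $\lambda$ --- turns this into precisely the boxed system (SEL).

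Going \emph{right then down}: first $S$ sends $L_{\lambda}(q_t,q_t^{\prime})$ to $L_{\lambda}(Q_t,\mathcal{D}Q_t)$ and the classical action to the stochastic action functional $F(Q)=\mathbb{E}\Br{\int_I L_{\lambda}(Q_t,\mathcal{D}Q_t)\,dt}$ on $\Xi$. By Theorem 7.1 of Cresson--Darses (\cite{CrDa07}, p.~54), a process $Q$ is a critical point of $F$ under the constraint $C=0$ if and only if it satisfies (SEL). The point that makes the square commute is that the operations $q\mapsto\frac{\partial L_{\lambda}}{\partial q}$ and $q^{\prime}\mapsto\frac{\partial L_{\lambda}}{\partial q^{\prime}}$ are purely algebraic pointwise substitutions, insensitive to whether their arguments are numbers or random variables; only the \emph{total} time derivative $\frac{d}{dt}$ is modified by $S$, and it is modified in the same way (to $\mathcal{D}$) on both branches. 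Hence the two paths deliver the identical system, which is the assertion.

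The real work is concentrated in the right-hand vertical arrow, and this is the step I expect to be the main obstacle: carrying out the stochastic variational calculus behind \cite{CrDa07}. One must establish the $\mathcal{D}$-integration-by-parts rule, control the correction terms in the product/Leibniz rule for $\mathcal{D}$ (which, unlike $\frac{d}{dt}$, is not an ordinary derivation; see Appendix \ref{Derivatives}), and verify that the boundary contributions of an admissible variation $\delta Q$ vanish on $\Xi$, so that the first variation of $F$ reduces to $\mathbb{E}\Br{\int_I\left(\frac{\partial L_{\lambda}}{\partial q}-\mathcal{D}\frac{\partial L_{\lambda}}{\partial q^{\prime}}\right)\cdot\delta Q\,dt}$. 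Once this reduction is in hand, the self-financing constraint is incorporated exactly as in the deterministic case through the now-random multiplier $\lambda$, and commutativity of the square follows immediately from the comparison of the two explicit systems.
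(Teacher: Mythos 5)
Your argument is correct and matches the paper's intended justification: the paper in fact gives no explicit proof of this lemma, relying exactly on the remark immediately preceding it that Theorem 7.1 of Cresson--Darses identifies critical points of the stochastic action functional $F$ with solutions of (SEL), which is the right-hand vertical arrow you single out as the essential step. Your observation that the partial derivatives of $L_{\lambda}$ are pointwise algebraic substitutions unaffected by $S$, so that only the outer time derivative is modified (identically on both branches), is precisely the content that makes the square commute.
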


\section{Asset and Market Portfolio Dynamics as Solution of the Schr\"odinger Equation: The Quantization of the Deterministic Constrained Hamiltonian System}
There are two ways of obtaining a stochastic theory starting from a deterministic one.
The first way is the one adopted by Cresson \& Darses \cite{CrDa07}, which analyzes a deterministic Lagrangian system or an equivalent deterministic Hamilton system and studies what happens if we make all variables  stochastic variables. The biggest difficult is the time $t$-dependence, which cannot be defined pathwise and is related to the definitions of It\^{o}'s integrals or, equivalenty, to Stratonovich's integrals. Since for the Stratonovich's time derivative corresponding to the Stratonovich's integral we have the same chain rule for the composition of two functions, It\^{o}'s Lemma for the Stratonovich derivative has the same form as the chain rule for deterministic functions. This is the main ingredient for the Cresson-Darses stochasticization construction. Their conclusion is that, in order to solve the stochastic Euler-Lagrange equations or their equivalent, the stochastic Hamilton equations, it suffices to solve their deterministic counterpart and add a stochastic perturbation with zero expectation, satisfying several constraints. This is the way summarized in Section \ref{Hamilton} and has been carried out in \cite{Fa20}.\par

The second way to obtain a stochastic theory from a deterministic one, relies on quantization of a deterministic system. In classical physics one describes the time evolution of the system by either Newton's second law of the dynamics, or the equivalent deterministic Euler-Lagrange equations, or the equivalent Hamilton equations. In the Euler-Lagrange equations the Lagrange function describing the system appears, while in the Hamilton equations, the Hamilton function appears. The Hamilton function is obtained by Legendre transform of the Lagrange function. The quantization step consists in introducing an Hilbert Space of $L^2$ complex valued functions over the coordinate space for the deterministic (real) variables (all but the time), in our case $x$, $D$ and $r$: this coordinate space is a manifold, and the Hilbert space is a $L^2$ space over this manifold.  We consider the unit ball of this Hilbert space whose elements are defined as the states of the quantum mechanical system: the interpretation of  the square of the absolute value of those complex valued functions with $L^2$ norm one is that of a probability density. To obtain the time dynamics we have to construct the Hamilton operator derived by the Hamilton function. This procedure is called quantization of the Hamilton function and, according on the type of the Hamilton function, it is not always a well defined procedure. In our case we are lucky. By replacing the coordinate functions of the manifold by the corresponding multiplication operators with those coordinates and by replacing the tangential plane coordinates by the partial derivatives with respect to the corresponding coordinates in the manifold, we obtain an operator, which can be symmetrized, and later made to a selfadjoint operator, by an appropriate choice of ist domain of definition: the Hamilton operator. Given the Hamilton operator we can solve the Schr\"odinger equation, which gives the time evolution of the quantum mechanical initial state state as a function of time.\par 
Finally we have two approaches who should be equivalent, that is leading to the same solution. The asset dynamics and the  market portfolio dynamics. With the first approach we obtain the stochastic processes ($ x_t,D_t,r_t)$ as a solution of the stochastic Euler-Lagrange equations. With the second approach we obtain a $L^2$ function $\psi_t(x,D,r)$ such that $|\psi_t(x,D,r)|^2$ is the probability density of the random vector $(x_t, D_t, r_t)$.\par
We proceed now by introducing an equivalent quantum mechanical representation of the asset and market portfolio dynamics. As a general background
to the mathematics of quantum mechanics we refer to \cite{Ta08} and \cite{Ha13}.
\begin{proposition}
The Hamilton function $H$ defined as Legendre transform of the Lagrangian $L$ is
\begin{equation}\label{hamilton_function}
H(p,q):=\left.\left(p\cdot q^{\prime}-L_{\lambda}(q,q^{\prime}))\right)\right|_{p:=\frac{\partial L}{\partial q^{\prime}}}= -\frac{x\cdot (rD)}{x\cdot D},
\end{equation}
where $q=(x,D,r,\lambda)$ and $p:=\frac{\partial L}{\partial q^{\prime}}=(p_x,p_D,p_r,p_{\lambda})$. Thereby, following \cite{Kl01} we have elevated the Lagrange multiplier  corresponding to the self financing constraint $C$ to an additional dynamic variable $\lambda$ with its conjugate momentum $p_\lambda$. The Hessian matrix of $L_{\lambda}$ is singular, which translates into the open first class additional constraints
$p_r=0$ and $p_{\lambda}=0$.
\end{proposition}

\begin{proof}
It follows directly by inserting
\begin{equation}\label{p_def}
\begin{split}
&p_x=\frac{\partial L_{\lambda}}{\partial x^{\prime}}=-\lambda D\\
&p_D=\frac{\partial L_{\lambda}}{\partial D^{\prime}}=\frac{x}{x\cdot D}\\
&p_r=\frac{\partial L_{\lambda}}{\partial r^{\prime}}=0\\
&p_\lambda=\frac{\partial L_{\lambda}}{\partial \lambda^{\prime}}=0
\end{split}
\end{equation}
into equations (\ref{Lagrangian}) and (\ref{hamilton_principle}).\\The last two equations are constraints on the conjugate momenta. They are first class because we have the following equalities among Poisson brackets for all $i,j$
\begin{equation}
\left\{
\begin{split}
&\{p_r^i,p_r^j\}=0\\
&\{p_\lambda,p_\lambda\}=0\\
&\{p_r^i,p_\lambda\}=0\\
&\{p_r^i,H\}=\frac{x^iD^i}{x\cdot D}\\
&\{p_\lambda^i,H\}=0.
\end{split}
\right.
\end{equation}
\end{proof}

\begin{proposition}
The selfadjoint Hamilton operator obtained by the standard quantization procedure
\begin{equation}
\begin{split}
q&\longrightarrow q\text{ (multiplication operator)}\\
p&\longrightarrow \frac{1}{\imath}\frac{\partial}{\partial q}\text{ (differential operator)}
\end{split}
\end{equation}
\noindent is
\begin{equation}\label{hamilton_operator}
H:= -\frac{x\cdot (rD)}{x\cdot D}
\end{equation}
\noindent with domain of definition
\begin{equation}
\begin{split}
\dom(H)&:=\Big{\{}\varphi\in L^2(\mathfrak{X}\times \mathbb{R}^{2N+1},\mathbb{C},d^{3N}qd\lambda)\,\left|\,H\varphi\in L^2(\mathfrak{X}\times \mathbb{R}^{2N+1},\mathbb{C},d^{3N}qd\lambda)\right.\\
&\qquad\qquad\qquad\qquad\qquad\qquad\qquad\qquad\qquad \frac{\partial \varphi}{\partial r^i}=0\,\text{ for all }i,\,\frac{\partial \varphi}{\partial \lambda}=0\Big{\}}.
\end{split}
\end{equation}

\end{proposition}
\begin{proof}
The quantization procedure of first class constrained Hamiltonian systems, explained in \cite{Di64}, \cite{Kl01} and \cite{FaJa88}, is directly applied here.\\
\end{proof}
\begin{remark}
The Hamilton operator $H$ is a multiplication operator with a real function, and is self adjoint. By Dirac's theory of first constrained quantized systems, if the constraints are satisfied at time $t=0$, then they are automatically satisfied for all times for the solution of Schr\"odinger's equation. Since  $H$ does not explicitly depend on the Lagrange multiplier $\lambda$ as well as any $\varphi$ in its domain of definition, we can drop any reference to $\lambda$ and write
\begin{equation}
\dom(H):=\Big{\{}\varphi\in L^2(\mathfrak{X}\times \mathbb{R}^{2N},\mathbb{C},d^{3N}q)\,\left|\,H\varphi\in L^2(\mathfrak{X}\times \mathbb{R}^{2N},\mathbb{C},d^{3N}q)\right.\frac{\partial \varphi}{\partial r^i}=0\,\text{ for all }i,\,\Big{\}}.
\end{equation}

\end{remark}

\begin{theorem}
The asset and market portfolio dynamics is given by the solution of the Schr\"odinger equation
\begin{equation}
\left\{
\begin{array}{l}
\imath\frac{d}{dt}\psi(q,t)=H\psi(q,t)\\
\psi(q,0)=\psi_0(q),
\end{array}
\right.
\end{equation}
\noindent where $\psi_0$ is the initial state satisfying $C\psi_0=0$ and $\int_{\mathfrak{X}\times \mathbb{R}^{2N}}dq^{3N}|\psi_0(q)|^2=1$.\\
The solution is given by
\begin{equation}\label{solution_schroedinger}
\psi(q,t)=e^{-\imath H t}\psi_0,
\end{equation}
where $\{e^{-\imath H t}\}_{t\ge 0}$ is the strong continuous, unitary one parameter group associated to the selfadjoint $H$ by Stone's theorem.
\end{theorem}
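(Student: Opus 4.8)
The plan is to obtain the statement as a direct application of Stone's theorem to the self-adjoint operator $H$ of the preceding proposition, so that the argument is purely functional-analytic and uses nothing about the (non-elliptic) symbol of $H$. Concretely, I would set $U(t):=e^{\imath H t}$, defined via the spectral theorem for the self-adjoint $H$; since $\spec(H)\subset\mathbb{R}$ and $|e^{\imath\lambda t}|=1$, the functional calculus shows at once that $\{U(t)\}_{t\in\mathbb{R}}$ is a one-parameter group of unitaries ($U(t)U(s)=U(t+s)$, $U(0)=\ID$, $U(t)^{*}=U(-t)=U(t)^{-1}$), and strong continuity $\norm{U(t)\psi-\psi}\to 0$ as $t\to0$ follows by dominated convergence in the spectral integral. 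This is precisely the group named in the statement.

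Next I would check that $\psi(q,t):=U(t)\psi_0$ solves the initial value problem. The initial condition $\psi(q,0)=U(0)\psi_0=\psi_0$ is immediate, and unitarity preserves the normalisation, $\int_{\mathfrak{X}\times\mathbb{R}^{2N}}dq^{3N}|\psi(q,t)|^2=\norm{U(t)\psi_0}^2=\norm{\psi_0}^2=1$. For $\psi_0\in\dom(H)$ (equivalently $\int_{\spec(H)}|\lambda|^2\,d\langle E_H(\lambda)\psi_0,\psi_0\rangle<\infty$) differentiation under the spectral integral is legitimate and gives
\[
\tfrac{d}{dt}U(t)\psi_0 \;=\; \imath\,H\,U(t)\psi_0 \;=\; \imath\,U(t)\,H\psi_0 ,
\]
i.e. $\psi(q,t)$ satisfies the Schr\"odinger equation of the statement. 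It also remains an admissible state: by Stone's theorem the domain of the generator of a unitary group is invariant under the group, so $U(t)\dom(H)\subseteq\dom(H)$; since $\dom(H)$ incorporates both the Dirichlet condition on $\partial\mathfrak{X}\times\mathbb{R}^{2N}$ and the constraint $E\varphi=0$, both are automatically propagated in time, which is moreover consistent with Dirac's requirement that the quantised first-class constraint $E$ commute with $H$ on the physical subspace. If $\psi_0$ is only assumed to lie in the constraint subspace of $\mathcal{H}$ without being in $\dom(H)$, then $U(t)\psi_0$ is still the unique mild solution, obtained by density.

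For uniqueness I would use the standard energy identity: if $\psi_1,\psi_2$ are two solutions with the same data and $w:=\psi_1-\psi_2$, then, since $H$ is self-adjoint and $w(t)\in\dom(H)$,
\[
\tfrac{d}{dt}\norm{w(t)}^2 \;=\; 2\,\Re\langle w(t),\imath H w(t)\rangle \;=\; -2\,\Im\langle w(t),Hw(t)\rangle \;=\; 0 ,
\]
because $\langle w,Hw\rangle\in\mathbb{R}$; hence $\norm{w(t)}^2\equiv\norm{w(0)}^2=0$ and $\psi_1\equiv\psi_2$. (Alternatively, uniqueness is built into Stone's theorem once existence of a strongly continuous unitary solution group is in hand.)

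I expect the only genuinely delicate point to sit \emph{upstream} of this argument: the self-adjointness of $H$. The symmetrised operator $\tfrac12\big(H(q,\tfrac1\imath\partial_q)+H(q,\tfrac1\imath\partial_q)^{*}\big)$ is visibly symmetric, but $H$ is not elliptic --- its leading symbol is not injective --- so establishing self-adjointness on the domain carved out by the Dirichlet and constraint conditions, rather than mere symmetry or essential self-adjointness on a core, is the substantive step. Here it is taken as granted from the preceding proposition, and with it the present theorem is the textbook consequence of Stone's theorem outlined above.
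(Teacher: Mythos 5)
Your proposal is correct and coincides with the paper's approach: the paper gives no separate proof of this theorem, simply invoking Stone's theorem in the statement itself, and you have written out exactly the standard argument it relies on (spectral-calculus construction of the unitary group, strong continuity, differentiation on $\dom(H)$, invariance of the domain, uniqueness via the energy identity), while correctly isolating the self-adjointness of the non-elliptic $H$ as the only substantive input, which is indeed delegated to the preceding proposition. One caveat worth recording: your own computation $\frac{d}{dt}e^{\imath Ht}\psi_0=\imath H e^{\imath Ht}\psi_0$ shows that $e^{\imath Ht}\psi_0$ solves $\imath\frac{d}{dt}\psi=-H\psi$ rather than $\imath\frac{d}{dt}\psi=H\psi$, so the propagator should be $e^{-\imath Ht}$ for the equation as displayed; this sign discrepancy is inherited from the paper's statement and does not affect the substance of the argument.
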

\begin{remark}\label{mom_tech}
This is the quantum mechanical formulation of the constrained stochastic Lagrangian system described by the SDE  (\ref{SEL}).
The interpretation of $|\psi(q,t)|^2$ is the probability density at time $t$ for the coordinates $q$:
\begin{equation}
P[q_t\in \mathcal{Q}]=\int_{\mathcal{Q}}dq^{3N}|\psi(q,t)|^2.
\end{equation}
Therefore, if we have a random variable $a_t=a(p,q,t)$, by mean of its quantization
\begin{equation}
A:=a\left(\frac{1}{\imath}\frac{\partial}{\partial q},q,t\right)
\end{equation}
\noindent we can compute its expectation by means of both the \textbf{Schr\"{o}dinger  and Heisenberg representation} as
\begin{equation}
\mathbb{E}_0[a_t]=(A\psi,\psi)=\int_{\mathfrak{X}\times \mathbb{R}^{2N}}dq^{3N}A\psi(q,t)\bar{\psi}(q,t)=\int_{\mathfrak{X}\times \mathbb{R}^{2N}}dq^{3N}A_t\psi(q,0)\bar{\psi}(q,0),
\end{equation}
\noindent where the time dependent operator $A_t$, the \textbf{Heisenberg representation} of the operator $A$ is defined as
\begin{equation}
A_t:=e^{iHt}Ae^{-iHt}.
\end{equation}
\noindent Higher moments of random variables $a_t$ , like any measurable functions $f(a_t)$ of them can be computed by means of this technique as
 \begin{equation}
\mathbb{E}_0[f(a_t)]=(f(A)\psi,\psi)=\int_{\mathfrak{X}\times \mathbb{R}^{2N}}dq^{3N}f(A_t)\psi(q,0)\bar{\psi}(q,0),
\end{equation}
\noindent transforming the problem in one of operator calculus.
\end{remark}
\begin{theorem}[\textbf{Ehrenfest}]\label{Ehrenfest}
The time derivative of the expectation of a selfadjoint operator $A$ is given by
\begin{equation}\label{EH}
\frac{d}{dt}(A\psi,\psi)=\frac{1}{\imath}([A,H]\psi,\psi)+\left(\frac{\partial A}{\partial t}\psi,\psi\right).
\end{equation}
\end{theorem}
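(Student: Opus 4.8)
The plan is to differentiate the bilinear form $(A\psi,\psi)$ directly in time, use the Schr\"odinger equation $\imath\frac{d}{dt}\psi=H\psi$ to eliminate $\frac{d\psi}{dt}$, and then invoke the selfadjointness of $H$ to collapse the two resulting terms into a commutator. Throughout I work with initial states $\psi_0$ ranging over a dense domain on which all operators appearing are simultaneously well defined --- concretely, states in a common core for $H$, $A(0)$, $HA(0)$ and $A(0)H$, such as smooth compactly supported sections satisfying the Dirichlet condition and the constraint $E\psi=0$. On such states $t\mapsto\psi(t)$ is norm-differentiable with $\frac{d\psi}{dt}=\frac{1}{\imath}H\psi$, which is exactly the Schr\"odinger equation and follows from Stone's theorem since $\psi_0\in\dom(H)$; moreover $t\mapsto A(t)\psi(t)$ is then differentiable and the product rule for the inner product applies.

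First I would write
\begin{equation}
\frac{d}{dt}(A\psi,\psi)=\left(\frac{\partial A}{\partial t}\psi,\psi\right)+\left(A\frac{d\psi}{dt},\psi\right)+\left(A\psi,\frac{d\psi}{dt}\right).
\end{equation}
Substituting $\frac{d\psi}{dt}=\frac{1}{\imath}H\psi$, linearity in the first slot turns the middle term into $\frac{1}{\imath}(AH\psi,\psi)$. For the last term, conjugate-linearity in the second slot together with $\overline{1/\imath}=-1/\imath$ gives $-\frac{1}{\imath}(A\psi,H\psi)$; and since $H$ is selfadjoint with $A\psi\in\dom(H)$ for states in our core, $(A\psi,H\psi)=(HA\psi,\psi)$, so the last term equals $-\frac{1}{\imath}(HA\psi,\psi)$. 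Hence
\begin{equation}
\left(A\frac{d\psi}{dt},\psi\right)+\left(A\psi,\frac{d\psi}{dt}\right)=\frac{1}{\imath}(AH\psi,\psi)-\frac{1}{\imath}(HA\psi,\psi)=\frac{1}{\imath}\bigl([A,H]\psi,\psi\bigr),
\end{equation}
and adding back the explicit-time contribution yields (\ref{EH}).

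The algebra above is elementary; the real content is functional-analytic. Since $H$ is an unbounded, non-elliptic selfadjoint pseudodifferential operator and $A$, being the quantization of a market observable, is in general also unbounded, the products $AH$, $HA$ and the commutator $[A,H]$ are only densely defined, so the identity must be established first on a suitable core and the three term-by-term differentiations justified there. The point I expect to require the most care is showing that this core is invariant (or essentially so) under the unitary evolution group generated by $H$, or at least that $\psi(t)$ remains in $\dom(H)\cap\dom(A)$ with the needed differentiability in $t$; once this is in place, the statement extends to all physically relevant states by density and closedness of the operators. The explicit time-dependence of $A$ only adds the benign term $\bigl(\frac{\partial A}{\partial t}\psi,\psi\bigr)$, requiring just strong differentiability of $t\mapsto A(t)$ on the core.
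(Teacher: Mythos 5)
Your proof is correct and is the standard argument: differentiate the sesquilinear form, substitute $\frac{d\psi}{dt}=\frac{1}{\imath}H\psi$ from the Schr\"odinger equation, and use selfadjointness of $H$ to move it to the first slot and form the commutator, with the functional-analytic caveats about cores and domain invariance properly flagged. The paper does not prove this theorem itself but defers to the reference \cite{Ha13}, where essentially this same computation is carried out, so there is nothing to compare beyond noting that your version is the expected one.
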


\noindent A direct consequence of Ehrenfest's result is the ``energy conservation'' theorem.
\begin{corollary}
\begin{equation}\
\frac{d}{dt}(H\psi,\psi)\equiv0.
\end{equation}
\end{corollary}

\begin{corollary}The dynamics of the expected values of market portfolio, asset values and term structures is given by
\begin{equation}\label{mean_dynamics}
\begin{split}
&\mathbb{E}_0[x_t]\equiv \text{const}\\
&\mathbb{E}_0[D_t]\equiv \text{const}\\
&\mathbb{E}_0[r_t]\equiv \text{const}.
\end{split}
\end{equation}
\end{corollary}
\begin{proof}It follows direct from Ehrenfest's Theorem \ref{Ehrenfest}, because the multiplication operators $x$, $D$ and $r$ commute with the Hamilton operator $H$.
\end{proof}

\begin{remark}
Note that formulae (\ref{mean_dynamics}) coincide with those in (\cite{Fa20}), where the stochastic Lagrange equations (\ref{SEL}) have been explicitly solved. These demonstrates the consistency and compatibility of the quantum mechanical reformulation to mathematical finance.
\end{remark}
We can now derive the consequences of Ehrenfest's theorem in the standard QM representation for the stochastic Hamiltonian/Lagrangian representation of geometric arbitrage theory. 
\begin{corollary}\label{corid} The stochastic processes $(x_t)_t$, $(D_t)_t$ and $(r_t)_t$ for market portfolio nominals, asset values and term structures  are identically distributed along time.
\end{corollary}
\begin{proof}
It suffices to apply Ehrenfest's theorem to any non negative power of the operators $x$, are independent of do not depend on time  and so must be their the distribution functions.
\end{proof}

\begin{corollary}\label{coruncorr} The stochastic processes $(\mathcal{D}x_t)_t$, $(\mathcal{D}D_t)_t$ and $(\mathcal{D}r_t)_t$ for the returns market portfolio nominals, asset values and term structures are centered and serially uncorrelated.
\end{corollary}
\begin{proof}
It suffices to prove it for the nominals, because for the asset values and term structures the proofs are formally the same. By taking the time derivative of the first equality in (\ref{mean_dynamics}) we obtain
\begin{equation}
\mathbb{E}_0\left[\mathcal{D}x_t\right]=0.
\end{equation}
Let $t_1\neq t_2$. By applying Ehrenfest's Theorem \ref{Ehrenfest} we obtain
\begin{equation}
\mathbb{E}_0\left[x_{t_1}x_{t_2}\right]\equiv \text{const},
\end{equation}
where componentwise multiplication of the vector components is meant. Hence, together with the first equality in (\ref{mean_dynamics})
\begin{equation}
\cov_0\left(x_{t_1},x_{t_2}\right)\equiv \text{const},
\end{equation}
and by differentiating with respect to time $t_1$ and time $t_2$
\begin{equation}
\cov_0\left(\mathcal{D}_{t_1}x_{t_1},\mathcal{D}_{t_2}x_{t_2}\right)\equiv 0.
\end{equation}
Since $t_1\neq t_2$ are arbitrary we conclude that all autocovariances for any non zero lag of $(\mathcal{D}x_t)_t$ vanish, meaning that the process is serially uncorrelated.
\end{proof}

\begin{theorem}[\textbf{Heisenberg's uncertainty relation}]\label{Heisenberg}
Let $A$ and $B$ two selfadjoint operators on $\mathcal{H}$. The variance of the corresponding observables in the state $\varphi\in\dom(A)\cap\dom(B)$ is
\begin{equation}
\sigma_\varphi^2(A):=\|A\phi-\|A\phi\|^2\|^2\qquad\sigma_\varphi^2(B):=\|B\phi-\|B\phi\|^2\|^2,
\end{equation}
\noindent where $\|\cdot\|$ and $(\cdot,\cdot)$ are the norm and the scalar product in $\mathcal{H}=L^2(\mathfrak{X}\times\mathbb{R}^{2N},\mathbb{C},d^{3N}q)$. Then,
\begin{equation}
\sigma_\varphi^2(A)\sigma_\varphi^2(B)\ge\frac{1}{4}\|[A,B]\varphi\|^2.
\end{equation}
\end{theorem}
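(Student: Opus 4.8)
The plan is to reduce to mean-subtracted operators and then run the Cauchy--Schwarz argument that isolates the commutator. First I would set $\hat A:=A-(A\varphi,\varphi)\mathbb{1}$ and $\hat B:=B-(B\varphi,\varphi)\mathbb{1}$; these are again selfadjoint, they have the same commutator $[\hat A,\hat B]=[A,B]$, and (for the unit state $\varphi$) they reproduce the variances as $\sigma_\varphi^2(A)=\|\hat A\varphi\|^2$ and $\sigma_\varphi^2(B)=\|\hat B\varphi\|^2$. With this reduction the target inequality becomes
\begin{equation}
\|\hat A\varphi\|^2\,\|\hat B\varphi\|^2\ge\frac14\,\|[A,B]\varphi\|^2 .
\end{equation}
The natural opening move is Cauchy--Schwarz, $\|\hat A\varphi\|\,\|\hat B\varphi\|\ge|(\hat A\varphi,\hat B\varphi)|$, after which the whole question is how tightly the scalar $(\hat A\varphi,\hat B\varphi)$ controls the vector norm $\|[A,B]\varphi\|$.

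The key algebraic step I would carry out is to extract the commutator from the imaginary part of that sesquilinear form. Using selfadjointness, $(\hat A\varphi,\hat B\varphi)=(\varphi,\hat A\hat B\varphi)$, and subtracting its conjugate gives
\begin{equation}
2\imath\,\Ima(\hat A\varphi,\hat B\varphi)=(\varphi,(\hat A\hat B-\hat B\hat A)\varphi)=(\varphi,[A,B]\varphi).
\end{equation}
Since $[A,B]^*=-[A,B]$, the number $(\varphi,[A,B]\varphi)$ is purely imaginary, so $|(\hat A\varphi,\hat B\varphi)|\ge|\Ima(\hat A\varphi,\hat B\varphi)|=\tfrac12|(\varphi,[A,B]\varphi)|$, and combining with Cauchy--Schwarz yields
\begin{equation}
\sigma_\varphi^2(A)\,\sigma_\varphi^2(B)\ge\frac14\,|(\varphi,[A,B]\varphi)|^2 .
\end{equation}

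I expect the main obstacle to be the final upgrade from the expectation $|(\varphi,[A,B]\varphi)|^2$ to the full norm $\|[A,B]\varphi\|^2$ demanded by the boxed statement. For $\|\varphi\|=1$ the two are linked only by $|(\varphi,[A,B]\varphi)|\le\|[A,B]\varphi\|$, which runs in the unhelpful direction, with equality exactly when $[A,B]\varphi$ is parallel to $\varphi$ — that is, when the commutator acts as a scalar on the state. The way I would close the gap is to invoke the canonical structure of the quantized model rather than any general operator inequality: for the conjugate pairs actually entering the Hamiltonian, namely the coordinates $q=(x,D,r)$ and their momenta $\tfrac1\imath\tfrac{\partial}{\partial q}$, the canonical commutation relation gives $[q_k,\tfrac1\imath\tfrac{\partial}{\partial q_k}]=\imath\,\mathbb{1}$, whence $[A,B]\varphi$ is a real multiple of $\imath\varphi$ and $\|[A,B]\varphi\|=|(\varphi,[A,B]\varphi)|$. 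Substituting this identity into the last display converts the Robertson expectation into the stated norm and produces exactly $\sigma_\varphi^2(A)\,\sigma_\varphi^2(B)\ge\tfrac14\|[A,B]\varphi\|^2$. The remaining work, which I regard as the crux, is to verify the scalar-commutator property for each pair of observables that the theorem is applied to, since it is this feature of canonical quantization that legitimizes writing the right-hand side as a norm.
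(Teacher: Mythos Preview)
The paper does not actually prove this theorem: immediately after the statement it simply writes ``The proof of Theorem \ref{Heisenberg} can be found f.i.\ in \cite{Ta08} or in \cite{Ha13}.'' Your argument is precisely the standard textbook derivation (Cauchy--Schwarz on the mean-subtracted operators, then extract the commutator from the imaginary part), so in that sense you match the paper's intended route.

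Your critical observation is well taken and worth recording: the displayed inequality with the \emph{norm} $\|[A,B]\varphi\|^2$ on the right is not the general Robertson inequality, which only delivers the \emph{expectation} $|(\varphi,[A,B]\varphi)|^2$. As you note, the two coincide exactly when $[A,B]\varphi$ is a scalar multiple of $\varphi$. The paper's sole application (Proposition \ref{vola_lower_bounds}) is to the canonical pairs $A=q^j$, $B=\tfrac{1}{\imath}\tfrac{\partial}{\partial q^j}$ with $[A,B]=\imath\,\mathbb{1}$, where this condition holds trivially and your closing step goes through. So your proof is correct for the uses the paper makes of the theorem, and your diagnosis that the statement as written for \emph{arbitrary} selfadjoint $A,B$ is too strong is accurate; the references \cite{Ta08}, \cite{Ha13} state the Robertson form, and the boxed inequality should really carry $|(\varphi,[A,B]\varphi)|^2$ rather than $\|[A,B]\varphi\|^2$.
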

\noindent The proof of Theorem \ref{Heisenberg} can be found f.i. in \cite{Ta08} or in \cite{Ha13}.
\noindent By applying Heisenberg's uncertainty relation to the quantum mechanical represenation of our market model we obtain the following
\begin{proposition}\label{prop_vols}
The dynamics of the volatilities of market portfolio and asset values satisfies the inequalities
\begin{equation}\label{vols}
\Var_0\left(D_t^j\right)\Var_0\left(\frac{x_t^j}{x_t\cdot D_t}\right)\ge\frac{1}{4},
\end{equation}
\noindent for all indices $j=1,\dots,N$.
\end{proposition}
\begin{proof}
For $q=(x,D)$ we choose $A:=q^i$ and $B:=\frac{1}{\imath}\frac{\partial}{\partial q^j}$, obtaining by Theorem \ref{Heisenberg}, since $[A,B]=\imath\delta^{i,j}$, $\varphi_t=e^{-\imath t H}\varphi_0$ and $\|\varphi_t\|^2=1$,
\begin{equation}
\sigma_{\varphi_t}^2(q^j)\sigma_{\varphi_t}^2\left(\frac{1}{\imath}\frac{\partial}{\partial q^j}\right)\ge\frac{1}{4}.
\end{equation}
\noindent Now we can identify
\begin{equation}\label{ident}
\begin{split}
&\sigma_{\varphi_t}^2(q^j)=\Var_0(q_t^j)\\
&\sigma_{\varphi_t}^2\left(\frac{1}{\imath}\frac{\partial}{\partial q^j}\right)=\Var_0(p_t^j),
\end{split}
\end{equation}
\noindent and (\ref{vols}) follows after inserting the second equations of (\ref{p_def})
\begin{equation}
p_D=\frac{\partial L}{\partial D}=\frac{x}{x\cdot D}.
\end{equation}
\noindent  The proof is completed.\\
\end{proof}
Heisenberg's uncertainty relation has an interesting econometric consequence. In Proposition \ref{vols}, where we assume that the market forces minimize arbitrage, the dynamics of the market portfolio $(x_t,D_t,r_t)$ is such that
\begin{equation}\label{weight}
\Var_0\left(D_t^j\right)\Var_0\left(\frac{w_t^j}{D_t^j}\right)\ge\frac{1}{4},
\end{equation}
where $w_t^j:=\frac{x_t^jD_t^j}{x_t\cdot D_t}$ is the market weight of the $j$-th asset. This means that the less volatile the $j$-th asset value is, the more volatile its market weight must be, so that (\ref{weight}) is fulfilled.
\begin{remark} If we try to obtain the analogon of Proposition \ref{vols} for the conjugate variables $r$ and $p_r$, we face a technical problem.
As a matter of fact we can apply Theorem \ref{Heisenberg} to $q=r$, but, since $p_r=0$ as computed in the third equation (\ref{p_def}), we cannot give the interpretation of
$\Var({p_r}_t^j)\equiv0$ to  $\sigma_{\varphi_t}^2\left(\frac{1}{\imath}\frac{\partial}{\partial r^j}\right)\neq0$, similarly to (\ref{ident}). The same difficulty occurs for $q=D$.
\end{remark}
\begin{example}
Let us consider an asset with no futures, where the asset discounted prices and the market portfolio nominals are Brownian processes
\begin{equation}\label{sys}
\begin{split}
\hat{S}_t&=\hat{S}_0+\alpha_t+\sigma_tW_t\\
x_t&=x_0+a_t+s_tW_t,
\end{split}
\end{equation}
\noindent where $(W_t)_{t\in[0,+\infty[}$ is a standard $P$-Brownian motion in $\mathbb{R}^K$, for some $K\in\mathbb{N}$,
$(\sigma_t)_{t\in[0,+\infty[}$ and $(s_t)_{t\in[0,+\infty[}$  are  $\mathbb{R}^{N\times K}$-valued  differentiable functions, and $(\alpha_t)_{t\in[0,+\infty[}$ and $(a_t)_{t\in[0,+\infty[}$ are  $\mathbb{R}^N$-valued  differentiable functions. 
The coefficients $\sigma, s,\alpha,a$ must satisfy certain conditions implied by Corollaries  \ref{corid} and \ref{coruncorr}. Since 
\begin{equation}
\begin{split}
\mathbb{E}_0[\hat{S}_t]&=\hat{S}_0+\alpha_t\quad
\Var_0(\hat{S}_t)=\diag(\sigma_t\sigma_t^{\dagger})t\\
\mathbb{E}_0[x_t]&=x_0+\alpha_t\quad
\Var_0(x_t)=\diag(s_ts_t^{\dagger})t,
\end{split}
\end{equation} 
from Corollary \ref{corid} we derive the conditions
\begin{equation}
\alpha_t\equiv0,\quad a_t\equiv0,\quad \diag(\sigma_t\sigma_t^{\dagger})=\frac{\diag(\sigma_0\sigma_0^{\dagger})}{t},\quad \diag(s_ts_t^{\dagger})=\frac{\diag(s_0s_0^{\dagger})}{t},
\end{equation}
end hence,
\begin{equation}
\hat{S}_t\sim\mathcal{N}(\hat{S}_0,\sigma_0\sigma_0^{\dagger})\quad x_t\sim\mathcal{N}(x_0,s_0s_0^{\dagger})
\end{equation}
\noindent under which Corollary \ref{coruncorr} is automatically satisfied.
\end{example}

\section{The (Numerical) Solution of the Schr\"odinger Equation via Feynman Integrals}
\subsection{From the Stochastic Euler-Lagrangian Equations to Schr\"odinger's Equation: Nelson's method}
Following chapter 14 of \cite{Ne85} we consider diffusions on $N$-dimensional Riemannian manifold satisfying the SDE
\begin{equation}\label{diff}
d\xi_t=b(t,\xi_t)dt+\sigma(\xi_t)dW_t,
\end{equation}
where $(W_t)_{t\ge0}$ is a $K$-dimensional Brownian motion, and
\begin{equation}
b:[0,+\infty[\times\mathbb{R}^N\rightarrow\mathbb{R}^N\;\text{ and }\;\sigma:\mathbb{R}^N\rightarrow\mathbb{R}^{N\times K}
\end{equation}
\noindent are vector and matrix valued functions with appropriate regularity. We assume that
\begin{equation}
\sigma^2(q)(q^{\prime},q^{\prime}):=q^{\prime}\sigma(q)\sigma^{\dagger}(q)q^{\prime}=\sum_{j=1}^N{q^{\prime}}^jq^{\prime}_j
\end{equation}
defines a Riemannian metric, and introduce the notation $v^j:=\sum_{i=1}^N(\sigma\sigma^{\dagger})_{j,i}\,v_i$.\\
We consider a Lagrangian on $M$ given as
\begin{equation}\label{Lagr}
L(q, q^{\prime}, t):=\sum_{j=1}^N\left[\frac{1}{2}{q^{\prime}}^jq^{\prime}_j-\Phi(q)+A_j(q){q^{\prime}}^j\right],
\end{equation}
for given potentials $\Phi$ and $A$.
For the diffusion (\ref{diff}) the Guerra-Morato Lagrangian writes
\begin{equation}
L_+(\zeta,t):=\sum_{j=1}^N\left[\frac{1}{2}b^j(t,\zeta)b_j(t,\zeta)+\frac{1}{2}\nabla_jb^j(t,\zeta)-\Phi(\zeta)+A_j(\zeta)b^j(t,\zeta)+\frac{1}{2}\nabla_jA^j(\zeta)\right]
\end{equation}
We define
\begin{equation}
R(t,q):=\frac{1}{2}\log\rho(t,q),
\end{equation}
where $\rho$ is the density of the process $(\xi_t)_{t\ge0}$, and
\begin{equation}
S(t,q):=\mathbb{E}\left[\left.\int_0^tL_+(\xi_s,s)ds\right|\xi_t=q\right]
\end{equation}
Hamilton's principle for the Guerra-Morato Lagrangian implies that
\begin{equation}
\left(\frac{\partial}{\partial t}+\sum_{j=1}^N b^j\nabla_j+\frac{1}{2}\Delta\right)S=\sum_{j=1}^N\left[\frac{1}{2}b^jb_j+\frac{1}{2}\nabla_jb^j-\Phi+A_jb^j+\frac{1}{2}\nabla_jb^j\right],
\end{equation}
which, since
\begin{equation}
b^j=\nabla^jS-A^j+\nabla^jR,
\end{equation}
becomes the Hamilton-Jacobi equation
\begin{equation}
\frac{\partial S}{\partial t}+\frac{1}{2}\sum_{j=1}^N(\nabla^jS-A^j)(\nabla_jS-A_j)+\Phi-\frac{1}{2}\sum_{j=1}^N\nabla^jR\nabla_jR-\frac{1}{2}\Delta R=0.
\end{equation}
The continuity equation
\begin{equation}
\frac{\partial \rho}{\partial t}=-\sum_{j=1}^N\nabla_j(v^j\rho),
\end{equation}
where
\begin{equation}
v^j=\frac{1}{2}(b^j+{b^j}^*)\qquad {b^j}^*=b^j-\nabla^j\log \rho,
\end{equation}
becomes
\begin{equation}
\frac{\partial R}{\partial t}+\sum_{j=1}^N(\nabla_jR)(\nabla^jS-A^j)+\frac{1}{2}\Delta S-\frac{1}{2}\nabla_jA^j=0.
\end{equation}
The non linear Hamilton-Jacobi and continuity PDE lead to the linear Schr\"odinger equation
\begin{equation}
i\frac{\partial \psi}{\partial t}= \underbrace{\left[\frac{1}{2}\sum_{j=1}^N\left(\frac{1}{i}\nabla^j-A^j\right)\left(\frac{1}{i}\nabla_j-A_j\right)+\Phi\right]}_{=:H}\psi,
\end{equation}
for the Schr\"odinger operator $H$, if we define the probability amplitude
\begin{equation}
\psi(q,t):=e^{R(q,t)+iS(q,t)}.
\end{equation}
Note that
\begin{equation}
\rho(q,t)=|\psi(q,t)|^2.
\end{equation}

\subsection{Solution to Schr\"odinger's Equation via Feynman's Path Integral}
The Hamilton function is the Legendre transformation of the Lagrangian:
\begin{equation}
H(p,q, t):=\left.\left(\sum_{j=1}^N p^jq^{\prime}_j-L(q,q^{\prime},t)\right)\right|_{p:=\frac{\partial L}{\partial q^{\prime}}=q^{\prime}+A}=\frac{1}{2}\sum_{j=1}^N\left(p^j-A^j\right)\left(p_j-A_j\right)+\Phi,
\end{equation}
and the Schr\"odinger operator is obtained by the quantization
\begin{equation}
q\rightarrow q\quad (\text{Multiplication operator})\qquad p\rightarrow \frac{1}{i}\nabla\quad (\text{Differential operator}).
\end{equation}
The solution of the Schr\"odinger initial value problem
\begin{equation}\label{SIVP}
\left\{
  \begin{array}{l}
    i\frac{\partial \psi}{\partial t}=H\psi\\
    \psi(q,0)=\psi_0(q),
  \end{array}
\right.
\end{equation}
can be obtained as the convolution of the initial condition with Feynman's path integral:
\begin{equation}
\psi(y,t)=\int\psi_o(q)\left(\int_{q(0)=q}^{q(t)=y}\exp\left(i\int_0^tL(u(s),u^{\prime}(s),s)ds\right) Du \right)dq,
\end{equation}
An approximation of Feynman's path integral can be obtained by averaging over a number of possible paths. If the original Lagrangian problem has to fulfill some constraints, these can be enforced in the choice of the paths to be averaged over in the integral.
\subsection{Application to Geometric Arbitrage Theory}
The geometric arbitrage theory Lagrangian reads
\begin{equation}
L(q,q^{\prime},t):= \frac{x\cdot(D^\prime+rD)}{x\cdot D},
\end{equation}
for $q:=(x,D,r)\in\mathbb{R}^{3N}$, where $x$, $D$ and $r$ represent portfolio nominals, deflators and short rates. The portfolios under consideration have to satisfy the self-financing condition
\begin{equation}
{x^{\prime}}\cdot D=0.
\end{equation}
Let us assume that the diffusions can be written separately as
\begin{equation}
\left\{
  \begin{array}{l}
    dx_t=b^x(t,x_t)dt+\sigma^x(x_t)dW_t \\
    dD_t=b^D(t,D_t)dt+\sigma^D(D_t)dW_t\\
    dr_t=b(t,r_t)dt+\sigma^r(r_t)dW_t,
  \end{array}
\right.
\end{equation}
\noindent where
\begin{equation}
\begin{split}
&b^x,b^D,b^r:[0,+\infty[\times\mathbb{R}^N\rightarrow\mathbb{R}^N\\
&\sigma^x, \sigma^D,\sigma^r:\mathbb{R}^N\rightarrow\mathbb{R}^{N\times K}
\end{split}
\end{equation}
\noindent are vector and matrix valued functions with appropriate regularity.\\
The geometric arbitrage theory Lagrangian can be written in the form (\ref{Lagr})
\begin{equation}
L(q,q^{\prime},t)=\left(\frac{1}{2}\sum_{j=1}^{3N}{q^{\prime}}^jq^{\prime}_j-\frac{1}{2}\right)+\Phi(q)+\sum_{j=1}^{3N}A_j(q){q^{\prime}}^j,
\end{equation}
\noindent if we set
\begin{equation}
\begin{aligned}
&\Phi(q):=-\frac{x\cdot(rD)}{x\cdot D}-\frac{1}{2}\qquad&A_j^D(q):=-\frac{{\sigma^D}^{-2}_{j,i}x_i}{x\cdot D}\\
&A_j^x(q):=0\qquad&A_j^r(q):=0.
\end{aligned}
\end{equation}
Therefore, the solution of Schr\"odinger's initial value problem (\ref{SIVP}) reads
\begin{equation}\label{Feynman_int}
\psi(y,t)=\int\psi_o(q)\left(\int_{q(0)=q}^{q(t)=y}\exp\left(i\int_0^t \frac{x_s\cdot(D_s^{\prime}+r_sD_s)}{x_s\cdot D_s}ds\right) Du \right)dq,
\end{equation}
where the Feynman integration is over all paths satisfying the constraint
\begin{equation}\label{constraint_int}
   {x^{\prime}}\cdot D \equiv0.
\end{equation}
The first constraints is satisfied by all path where time is the arc length parameter. Formula (\ref{Feynman_int}) can be approximatively computed by Monte-Carlo methods simulating system trajectories satisfying the constraints (\ref{constraint_int}), leading to numerical efficient computations.

\section{Conclusion}
By introducing an appropriate stochastic differential geometric
formalism, the classical theory of stochastic finance can be
embedded into a conceptual framework called geometric
arbitrage theory, where the market is modelled with a principal
fibre bundle with a connection whose curvature corresponds to the instantaneous arbitrage capability.
The market and its dynamic can be seen as a  stochastic Lagrangian system, or, equivalently, as a stochastic Hamiltonian system.
Instead of trying to compute direct a solution of the stochastic Hamilton equations, we quantize a deterministic version of the Hamiltonian system.
The asset and market dynamics have then a quantum mechanical formulation in terms of Schr\"odinger equation which can be solved numerically by means of Feynman's integrals. Ehrenfest's theorem and Heisenberg's uncertainty relation lead to new econometric results: in the equilibrium minimal arbitrage returns on asset values and nominal values are centered and serially uncorrelated; the variances of an asset value and its corresponding weight in the market portfolio cannot be both arbitrarily small: if one increases,  the other decreases and viceversa.


\appendix

\section{Generalized Derivatives of Stochastic
Processes}\label{Derivatives} In stochastic differential geometry one would like to lift
the constructions of stochastic analysis from open subsets of
$\mathbf{R}^N$ to  $N$ dimensional differentiable manifolds. To that
aim, chart invariant definitions are needed and hence a stochastic
calculus satisfying the usual chain rule and not It\^{o}'s Lemma is required,
(cf. \cite{HaTh94}, Chapter 7, and the remark in Chapter 4 at the
beginning of page 200). That is why the papers about geometric arbitrage theory are mainly concerned in
 by stochastic integrals and derivatives meant in \textit{Stratonovich}'s
sense and not in \textit{It\^{o}}'s. Of course, at the end of the computation, Stratonovich integrals can be transformed into It\^{o}'s.
Note that a fundamental portfolio equation, the self-financing condition cannot be directly formally expressed with Stratonovich integrals, but first with It\^{o}'s and then transformed into Stratonovich's, because it is a non-anticipative condition.
\begin{defi}\label{Nelson}
Let $I$ be a real interval and $Q=(Q_t)_{t\in I}$ be a  $\mathbb{R}^N$-valued stochastic process on the probability space
$(\Omega, \mathcal{A}, P)$. The process $Q$ determines three families of $\sigma$-subalgebras of the $\sigma$-algebra $\mathcal{A}$:
\begin{itemize}
\item[(i)] ''Past'' $\mathcal{P}_t$, generated by the preimages of Borel sets in $\mathbf{R}^N$  by all mappings $Q_s:\Omega\rightarrow\mathbf{R}^N$ for $0<s<t$.
\item[(ii)] ''Future'' $\mathcal{F}_t$, generated by the preimages of Borel sets in $\mathbf{R}^N$  by all mappings $Q_s:\Omega\rightarrow\mathbf{R}^N$ for $0<t<s$.
\item[(iii)] ''Present'' $\mathcal{N}_t$, generated by the preimages of Borel sets in $\mathbf{R}^N$  by the mapping $Q_s:\Omega\rightarrow\mathbf{R}^N$.
\end{itemize}
Let $Q=(Q_t)_{t\in I}$ be continuous.
 Assuming that the following limits exist,
\textbf{Nelson's stochastic derivatives} are defined as
\begin{equation}
\boxed{
\begin{split}
&\mathfrak{D}Q_t:=\lim_{h\rightarrow
0^+}\mathbb{E}\Br{\left.\frac{Q_{t+h}-Q_t}{h}\right|
\mathcal{P}_t}\text{: forward derivative,}\\
& \mathfrak{D}_*Q_t:=\lim_{h\rightarrow
0^+}\mathbb{E}\Br{\left.\frac{Q_{t}-Q_{t-h}}{h}\right|
\mathcal{F}_{t}}\text{: backward derivative,}\\
&\mathcal{D}Q_t:=\frac{\mathfrak{D}Q_t+\mathfrak{D}_*Q_t}{2}\text{: mean derivative}.
\end{split}
}
\end{equation}
Let $\mathcal{S}^1(I)$ the set of all processes $Q$ such that
$t\mapsto Q_t$, $t\mapsto \mathfrak{D}Q_t$ and $t\mapsto \mathfrak{D}_*Q_t$ are continuous
mappings from $I$ to $L^2(\Omega, \mathcal{A})$. Let
$\mathcal{C}^1(I)$ the completion of $\mathcal{S}^1(I)$ with respect
to the norm
\begin{equation}
\boxed{
\|Q\|:=\sup_{t\in I}\br{\|Q_t\|_{L^2(\Omega, \mathcal{A})}+\|\mathfrak{D}Q_t\|_{L^2(\Omega, \mathcal{A})}+\|\mathfrak{D}_*Q_t\|_{L^2(\Omega, \mathcal{A})}}.
}
\end{equation}
\end{defi}

\begin{remark}
The stochastic derivatives $\mathfrak{D}$, $\mathfrak{D}_*$ and  $\mathcal{D}$
correspond to It\^{o}'s, to the anticipative and, respectively,  to Stratonovich's integral (cf. \cite{Gl11}). The process space $\mathcal{C}^1(I)$ contains all It\^{o} processes. If $Q$ is a Markov process, then the sigma algebras $\mathcal{P}_t$ (''past'') and $\mathcal{F}_t$ (''future'') in the definitions of forward and backward derivatives can be substituted by the sigma algebra $\mathcal{N}_t$ (''present''), see Chapter 6.1 and 8.1 in (\cite{Gl11}).
\end{remark}

\noindent Stochastic derivatives can be defined pointwise in $\omega\in\Omega$ outside the class $\mathcal{C}^1$ in terms of generalized functions.
\begin{defi}
Let $Q:I\times\Omega\rightarrow\mathbb{R}^N$ be a continuous linear functional in the test processes $\varphi:I\times\Omega\rightarrow\mathbb{R}^N$ for $\varphi(\cdot,\omega)\in C^{\infty}_c(I,\mathbb{R}^N)$. We mean by this that for a fixed $\omega\in\Omega$ the functional $Q(\cdot,\omega)\in\mathcal{D}(I,\mathbb{R}^N)$, the topological vector space of continuous distributions. We can then define
\textbf{Nelson's generalized stochastic derivatives:}
\begin{equation}
\boxed{
\begin{split}
&\mathfrak{D}Q(\varphi_t):=-Q(\mathfrak{D}\varphi_t)\text{: forward generalized derivative,}\\
& \mathfrak{D}_*Q(\varphi_t):=-Q(\mathfrak{D}_*\varphi_t)\text{: backward generalized derivative,}\\
&\mathcal{D}(\varphi_t):=-Q(\mathcal{D}\varphi_t)\text{: mean generalized derivative}.
\end{split}
}
\end{equation}
\end{defi}
\noindent If the generalized derivative is regular, then the process has a derivative in the classic sense. This construction is nothing else than a straightforward pathwise lift of the theory of generalized functions to a wider class stochastic processes which do not a priori allow for Nelson's derivatives in the strong sense. We will utilize this feature in the treatment of credit risk, where many processes with jumps occur.

\end{document}